\numberwithin{equation}{section}
\newtheorem{Theorem}{Theorem}[section]
\theoremstyle{definition}
\newtheorem{Remark}[Theorem]{Remark}}
\newcommand{\I}{\scriptscriptstyle \rm I}
\newcommand{\II}{\scriptscriptstyle \rm I\hspace{-1pt}I}
\newcommand{\is}{\scriptscriptstyle \rm i}
\newcommand{\ii}{\scriptscriptstyle \rm ii}
\newcommand{\B}{\mathbb{B}}
\newcommand{\C}{\mathbb{C}}
\newcommand{\T}{\mathbb{T}}
\newcommand{\bu}{\bar u}
\newcommand{\bv}{\bar v}
\newcommand{\wt}[1]{\widetilde{#1}}
\newcommand{\grad}{\mathop{\rm grad}}
\begin{document}

\allowdisplaybreaks

\newcommand{\arXivNumber}{1611.00943}

\renewcommand{\PaperNumber}{015}

\FirstPageHeading

\ShortArticleName{Bethe Vectors for Composite Models with $\mathfrak{gl}(2|1)$ and $\mathfrak{gl}(1|2)$ Supersymmetry}

\ArticleName{Bethe Vectors for Composite Models \\ with $\boldsymbol{\mathfrak{gl}(2|1)}$ and $\boldsymbol{\mathfrak{gl}(1|2)}$ Supersymmetry}

\Author{Jan FUKSA~$^{\dag\ddag}$}

\AuthorNameForHeading{J.~Fuksa}

\Address{$^\dag$~Bogoliubov Laboratory of Theoretical Physics, Joint Institute for Nuclear Research,\\
\hphantom{$^\dag$}~Dubna, Russia}
\EmailD{\href{mailto:fuksa@theor.jinr.ru}{fuksa@theor.jinr.ru}}

\Address{$^\ddag$~Faculty of Nuclear Sciences and Physical Engineering, Czech Technical University in Prague,\\
\hphantom{$^\ddag$}~Czech Republic}

\ArticleDates{Received November 18, 2016, in f\/inal form March 03, 2017; Published online March 13, 2017}

\Abstract{Supersymmetric composite generalized quantum integrable models solvable by the algebraic Bethe ansatz are studied. Using a coproduct in the bialgebra of monodromy matrix elements and their action on Bethe vectors, formulas for Bethe vectors in the composite models with supersymmetry based on the super-Yangians $Y[\mathfrak{gl}(2|1)]$ and $Y[\mathfrak{gl}(1|2)]$ are derived.}

\Keywords{algebraic Bethe ansatz; composite models}

\Classification{17B37; 81R50; 82B23}

\section{Introduction}

The main success of the algebraic Bethe ansatz resides in the general prescription how to obtain eigenvectors (famous Bethe vectors) for a vast class of quantum integrable models. Consequently, it enables the calculation of the correlation functions via the calculation of the form factors. The Bethe vectors for models related to the algebra $\mathfrak{gl}(2)$ and its deformations have a very simple form (see \cite{Fad96,IK84,KBI93,FST80, TF79} and references therein). However, they become very nontrivial for models based on higher rank (super)algebras.

What is common for the Bethe vectors in the models with $\mathfrak{gl}(2)$ and higher rank symmetry (super)algebras is that they belong to the state (super)space $\mathcal{H}$ with a structure of the Fock space. In other words, $\mathcal{H}$ contains a cyclic vector $\Omega$ (usually called pseudovacuum) and the Bethe vectors are generated by the action of certain creation-like operators on $\Omega$. These creation-like operators belong to the (super)algebra of matrix elements of the monodromy matrix $T(u)$.

The problem of the higher rank (super)algebras has been addressed by the nested Bethe ansatz method. Its origin dates back to the works~\cite{Suth68, Yang67} in the context of the coordinate Bethe ansatz, whereas its algebraic version was introduced in~\cite{KuRe83}. The last years have been marked by considerable progress in f\/inding ``user-friendly'' forms of Bethe vectors for higher rank (super)algebras. For $\mathfrak{gl}(3)$ and its quantum deformations, Bethe vectors were calculated in~\cite{BPRS13BVgl3,BPRS13BVgl3q}. Bethe vectors for the superalgebras $\mathfrak{gl}(2|1)$ and $\mathfrak{gl}(1|2)$ were obtained in~\cite{PRS16bethe} and used to calculate the scalar products~\cite{HLPRS16scalar1,HLPRS16scalar2} and the form factors of the monodromy matrix elements~\cite{HLPRS16FF}. Based on this, correlation functions in models with this type of supersymmetry can be investigated via the form factor expansion.

The knowledge of form factors of local operators thus becomes an important condition for the calculation of correlation functions. In their turn, the form factors of local operators can be computed in two dif\/ferent ways. To clarify the purpose and content of this paper, we brief\/ly describe these approaches.

The f\/irst method is applied to the models with a known solution of the quantum inverse scattering problem (see, e.g.,~\cite{KMT00,MT00}).
These include the supersymmetric $t$-$J$ model (in the context of $\mathfrak{gl}(2|1)$) well-known from the condensed matter physics. The corresponding quantum inverse scattering problem was solved in~\cite{GoKo00}. The form factors of local operators in this model reduce to the form factors of the monodromy matrix elements. Their calculation, in its turn, reduces to the well-studied problem of calculating scalar products of Bethe vectors. Thus, in this case, the inverse scattering problem provides a powerful tool for calculating the form factors of local operators. However, solutions of the quantum inverse scattering problem are known for a rather restricted class of integrable models.

The second approach to the calculation of the form factors of local operators is more general and does not imply knowledge of explicit solutions of the quantum inverse scattering problem. It is based on the composite model~\cite{IK84}.\footnote{The terminology witnessed a long evolution here. The composite model was originally called the two-site model in~\cite{IK84}. Later the term two-component model was used~\cite{Slav07}. Both these terms can lead to confusion in some situations. Therefore, the term composite model was proposed recently~\cite{PRS15gl3compBV}. We hold this terminology in our article.} There are several integrable models related to the~$\mathfrak{gl}(2|1)$ superalgebra to which the composite model is applicable. They are, e.g., the impurity version of the $t$-$J$ model~\cite{FLT99}, the doped Heisenberg chains \cite{Frahm99}, or the supersymmetric $U$ model~\cite{Bracken95} and~\cite{Bedurftig95,Pfannmuller96}. Unlike the $t$-$J$ model, they are not based on the fundamental representation of~$\mathfrak{gl}(2|1)$, therefore, the quantum inverse scattering problem is unsolved for them. However, the method of the composite model still works in this case, as it is based on the algebraic Bethe ansatz only.

The main idea of the composite model method is that the interval $[0,L]$, on which the original model is def\/ined, is divided into two subintervals $[0,x]$ and $]x,L]$. Consequently, the state (super)space $\mathcal{H}$ of the complete model is divided into two (graded) subspaces~$\mathcal{H}^{(1)}$ and~$\mathcal{H}^{(2)}$ corresponding to $[0,x]$ and $]x,L]$, respectively, such that $\mathcal{H}=\mathcal{H}^{(1)}\otimes\mathcal{H}^{(2)}$. Simultaneously, the monodromy matrix $T(u)$ can be presented as a matrix product of two partial monodromy matrices $T^{(2)}(u)$ and $T^{(1)}(u)$:
\begin{gather}\label{monP}
T(u) = T^{(2)}(u) \cdot T^{(1)}(u).
\end{gather}
The partial monodromy matrices $T^{(\ell)}(u)$, $\ell=1,2$, are to be understood as particular representations of the complete monodromy matrix $T(u)$ on the partial state (super)spaces~$\mathcal{H}^{(\ell)}$. This shall be ascribed to a bialgebraic structure on the algebra of matrix elements of the monodromy matrix. The property~\eqref{monP} is equivalent to a comultiplication in the aforementioned bialgebra. The partial monodromy matrix~$T^{(\ell)}(u)$ acts nontrivially only in its corresponding representation (super)space $\mathcal{H}^{(\ell)}$.

We suppose that the partial state (super)spaces $\mathcal{H}^{(\ell)}$ possess also the structure of the Fock space with the partial pseudovacua $\Omega^{(\ell)}$ such that $\Omega=\Omega^{(1)} \otimes \Omega^{(2)}$. A nontrivial fact is that the Bethe vectors $\B\in\mathcal{H}$ of the complete model can be represented as bilinear combinations of partial Bethe vectors $\B^{(1)}\in\mathcal{H}^{(1)}$ and $\B^{(2)}\in\mathcal{H}^{(2)}$. For the models based on the $\mathfrak{gl}(2)$ symmetry, this was shown in~\cite{IK84}. The case of~$\mathfrak{gl}(3)$ was recently studied in~\cite{PRS15gl3compBV}. The aim of this article is to f\/ind a~similar representation for the integrable models based on the superalgebras $\mathfrak{gl}(2|1)$ and~$\mathfrak{gl}(1|2)$.
Such a representation allows one to compute the form factors of the partial monodromy matrix elements $T_{ij}^{(\ell)}(u)$ in the basis of the Bethe vectors of the complete model. Based on this, the form factors and correlation functions of local operators can be investigated.

It is worth mentioning that the composite model is tied with the theory of the quantized Knizhnik--Zamolodchikov (qKZ) equation. Bethe vectors play a principal role in f\/inding its integral solutions \cite{BKZ97, TV95}. This is related to the fact that the Bethe vectors belong to the same vector space as the solutions of qKZ. For an $R$-matrix~$R(u)$, which at a certain value of the parameter $u$ becomes a permutation operator, a technique of solving the qKZ based on the nested Bethe ansatz was developed in~\cite{BKZ97}. In~\cite{TV95}, the so-called weight functions are constructed as basic building blocks for the integral solutions of the qKZ. They satisfy a coproduct property. The coproduct property states that from two weight functions belonging to vector spaces~$V^{(1)}$ and $V^{(2)}$, respectively, a weight function belonging to their tensor product $V^{(1)}\otimes V^{(2)}$ can be constructed in a unique way. The weight functions are nothing else but the Bethe vectors for an integrable model whose monodromy matrix is constructed from the $R$-matrices corresponding to the qKZ equation. The composite model represents in an explicit way the coproduct property of the weight functions, i.e., of the Bethe vectors.

The general strategy of this article is similar to the strategy of the paper \cite{PRS15gl3compBV} on $\mathfrak{gl}(3)$. However, some technical details appear due to dif\/ferent commutation relations.
We describe these technical dif\/ferences. We show how they change calculations for a concrete case, but we do not give all calculations.

Recall that the superalgebras $\mathfrak{gl}(2|1)$ and $\mathfrak{gl}(1|2)$ are isomorphic. The same holds also for their corresponding super-Yangians $Y[\mathfrak{gl}(2|1)]$ and $Y[\mathfrak{gl}(1|2)]$. Therefore, we mostly concern ourselves with the models based on $\mathfrak{gl}(2|1)$ and then isomorphically map the results to the $\mathfrak{gl}(1|2)$ case. Let us emphasize that the underlying algebraic structure of the models described in this paper are the super-Yangians $Y[\mathfrak{gl}(2|1)]$ and $Y[\mathfrak{gl}(1|2)]$, respectively.

The paper is organized as follows. In Section~\ref{sec:gl21basic}, basic notions of the $\mathfrak{gl}(2|1)$-invariant quantum integrable models like the monodromy matrix, the RTT algebra, and the Bethe vectors are introduced, and the notation used further in the article is described. Section \ref{sec:gl21composite} describes the $\mathfrak{gl}(2|1)$-invariant composite model. The main results are formulated in two theorems contained in Section~\ref{sec:Results21}. Section~\ref{sec:Action} deals with some technical details from Section~\ref{sec:Results21}. Section~\ref{sec:gl12basics} is devoted to the $\mathfrak{gl}(1|2)$-invariant composite model. The main results are presented in two theorems in Section~\ref{sec:Results12}.

\section[Basic notions of the $\mathfrak{gl}(2|1)$-invariant model]{Basic notions of the $\boldsymbol{\mathfrak{gl}(2|1)}$-invariant model}\label{sec:gl21basic}

\subsection[$R$-matrix and RTT algebra]{$\boldsymbol{R}$-matrix and RTT algebra} \label{sec:RTT}

The content of this subsection can be easily generalized to superalgebras of other ranks.

The $R$-matrix acts in the tensor product of two $\mathbb{Z}_2$-graded auxiliary superspaces $\mathbb{C}^{2|1}$. The basis of the even part of $\mathbb{C}^{2|1}$ is $\{e_1,e_2\}$ and of the odd part is $\{e_{3}\}$. We introduce the parity function $[\ ]\colon \{1,2,3 \}\rightarrow \mathbb{Z}_2$ such that $[1]=[2]=0$ and $[3]=1$. The gradation of the superspace~$\mathbb{C}^{2|1}$ is thus described by this parity function: $\grad(e_i)=[i]$, $i=1,2,3$.

The matrix units $E_{ij}\in \mathrm{End}(\mathbb{C}^{3})$ are introduced in the standard way $ (E_{ij})_{ab} = \delta_{ia} \delta_{jb}$ with the gradation
$ \grad(E_{ij}) = [i]+ [j]$. The tensor product is graded in the following way:
\begin{gather*}
(E_{ij} \otimes E_{kl}) (E_{mn}\otimes E_{pq}) = (-1)^{([k]+[l])([m]+[n])} E_{ij} E_{mn} \otimes E_{kl} E_{pq}.
\end{gather*}
The graded permutation (superpermutation) for two auxiliary superspaces \cite{KS82solutions} is def\/ined using the matrix units
\begin{gather*}
 P = \sum_{i,j=1}^{3} (-1)^{[j]} E_{ij} \otimes E_{ji}.
\end{gather*}

The $\mathfrak{gl}(2|1)$-invariant $R$-matrix has the explicit form
\begin{gather}\label{Rmat}
R(u,v) = \mathbb{I} \otimes \mathbb{I} + g(u,v) P,
\end{gather}
where $\mathbb{I}$ is the unit matrix in $\mathbb{C}^{2|1}$. The function $g(u,v)$ is antisymmetric and rational{\samepage
\begin{gather*}
g(u,v) = \frac{c}{u-v}
\end{gather*}
and $c\in\mathbb{C}$ is an auxiliary constant.}

The def\/inition of the monodromy matrix is standard $T(u) = \sum\limits_{i,j=1}^{3} E_{ij} \otimes T_{ij}(u)$. The ele\-ments $T_{ij}(u)$, $i,j=1,2,3$, together with the unit element $\mathbf{1}$ are generators of the associative superalgebra~$\mathcal{A}$. The monodromy matrix is a globally even matrix due to the fact that $\grad(E_{ij})=\grad(T_{ij}(u))$. It satisf\/ies the RTT relation with the $R$-matrix~\eqref{Rmat}
\begin{gather}\label{RTT}
R(u,v) (T(u) \otimes \mathbb{I}) (\mathbb{I}\otimes T(v)) = (\mathbb{I}\otimes T(v)) (T(u) \otimes \mathbb{I}) R(u,v).
\end{gather}
The RTT relation is equivalent to the bilinear relations in~$\mathcal{A}$ which can be written in two equivalent forms
\begin{gather}
[ T_{ij}(u), T_{kl}(v)\} = g(u,v) (-)^{[i][j]+[i][l]+[j][l]} (T_{il}(u)T_{kj}(v) - T_{il}(v)T_{kj}(u)) \notag\\
\hphantom{[ T_{ij}(u), T_{kl}(v)\} }{} = - g(u,v) (-)^{[i][k]+[i][l]+[k][l]} (T_{kj}(u)T_{il}(v) - T_{kj}(v)T_{il}(u)) \label{Product}
\end{gather}
with the supercommutator $ [ T_{ij}(u), T_{kl}(v)\} \equiv T_{ij}(u) T_{kl}(v) - (-)^{([i]+[j])([k]+[l])} T_{kl}(v) T_{ij}(u)$. The superalgebra~$\mathcal{A}$ is called the RTT algebra.

There is a coalgebraic structure def\/ined on $\mathcal{A}$ with the following coproduct:
\begin{gather}
\Delta(T_{ij}(u)) \equiv \sum_{k=1}^3 T_{kj}(u) \otimes T_{ik}(u) = \sum_{k=1}^3 T_{kj}^{(1)} (u) T_{ik}^{(2)} (u). \label{Coproduct}
\end{gather}
The superscripts at $T_{ij}^{(\ell)}(u)$, $\ell=1,2$, are used to distinguish the two copies of the bialgebra~$\mathcal{A}$. The elements of dif\/ferent copies of~$\mathcal{A}$ mutually supercommute $ \big[ T_{il}^{(1)} (u),T_{jk}^{(2)} (v) \big\} = 0$. The partial monodromy matrices $T^{(\ell)}(u)$, $\ell=1,2$, satisfy the same RTT relation~\eqref{RTT} as $\mathcal{A}$ has the structure of bialgebra. It is worth mentioning that the coproduct~\eqref{Coproduct} is equivalent to relation~\eqref{monP} for the monodromy matrix of the composite model.

Let us remind that there is an antimorphism of $\mathcal{A}$ \cite{PRS16bethe}:
\begin{gather}\label{amorph1}
 \psi( T_{ij}(u) ) = (-1)^{[i][j]+[i]} T_{ji}(u),\qquad
\psi(A B) = (-1)^{\grad(A)\cdot\grad(B)} \psi(B) \psi(A), 
\end{gather}
for $A,B\in \mathcal{A}$ of def\/inite gradings. It preserves the supercommutator $ \psi([A,B\}) = -[\psi(A),\psi(B)\}$ and satisf\/ies the composition rule
\begin{gather}\label{keyCP}
\Delta \circ \psi = (\psi \otimes \psi) \circ \Delta'
\end{gather}
with the standard coproduct \eqref{Coproduct} and the opposite coproduct\footnote{The name {\em opposite coproduct} is obviously relative. It refers to the fact that it intertwines the factors in the tensor product in comparison with the ``standard'' coproduct~\eqref{Coproduct}. In the same way, \eqref{Coproduct} is the opposite coproduct to \eqref{OpCopr}. Despite this, we call in this article the coproduct of the type \eqref{Coproduct} the {\em standard} and of \eqref{OpCopr} the {\em opposite}.}
\begin{gather}\label{OpCopr}
\Delta'(T_{ij}(u)) = \sum_k (-1)^{([i]+[k])([k]+[j])} T_{ik}(u) \otimes T_{kj}(u) = \sum_k T^{(2)}_{kj} (u) T^{(1)}_{ik} (u).
\end{gather}
The composition rule~\eqref{keyCP} is the key property for the investigation of dual Bethe vectors in the composite model, as we will see below.

\subsection{Notation}

The following functions are used throughout the text:
\begin{gather*}
 f(u,v) = 1+g(u,v) = \frac{u-v+c}{u-v}, \qquad h(u,v) = \frac{f(u,v)}{g(u,v)} = \frac{u-v+c}{c}.
\end{gather*}

We completely follow the notation used, e.g., in~\cite{PRS16bethe}. The sets of parameters are denoted as~$\bu$, $\bv$ etc. Their individual elements as $u_j$,~$v_0,$ etc. The notation~$\bu_j$, $\bv_0$ means $\bu\backslash u_j$, $\bv\backslash v_0$, etc. To avoid lengthy and complicated formulas, we use the shorthand notation for products of the above functions over the sets of parameters. For example,
\begin{gather*}
f(\bu,v) \equiv \prod_{u_j\in\bu} f(u_j,v), \qquad g(u,\bv_i)\equiv \prod_{\substack{v_j\in\bv \\ v_j\neq v_i}} g(u,v_j), \qquad h(\bu,\bv) = \prod_{u_j\in\bu} \prod_{v_k\in\bv} h(u_j,v_k).
\end{gather*}
This notation is preserved also for products of even operators
\begin{gather*}
 T_{ij}(\bu) \equiv \prod_{u_k\in\bu} T_{ij}(u_k), \qquad [i]+[j]=0.
\end{gather*}

For products of odd operators $T_{i3}(u)$ and $T_{3i}(u)$, $i=1,2$, the symmetrised product is used
\begin{gather*}
\mathbb{T}_{i3}(\bar u) \equiv \frac{T_{i3}(u_1)T_{i3}(u_2)\cdots T_{i3}(u_n)}{\prod\limits_{1\leq j<k \leq n} h(u_k,u_j)}, \qquad \mathbb{T}_{3i}(\bar u) \equiv \frac{T_{3i}(u_1)T_{3i}(u_2)\cdots T_{3i}(u_n)}{\prod\limits_{1\leq j<k \leq n} h(u_j,u_k)}.
\end{gather*}

A set of parameters $\bu$ is often divided into its two disjoint subsets $\bu_{\I}$, $\bu_{\II}$: $\bu_{\I} \cap \bu_{\II} =\varnothing$ and $\bu_{\I} \cup \bu_{\II} = \bu$. We denote it as $\bu\Rightarrow \{\bu_{\I},\bu_{\II}\}$. Such a partition is always accompanied by a summation over all partitions of the prescribed type, just according to the rule: ``{\it where a~partition is, there is a summation}''.

\subsection{Bethe vectors}

The Bethe vectors for the $\mathfrak{gl}(2|1)$-invariant models were calculated in~\cite{PRS16bethe}. They are obtained under the assumption that there is a cyclic vector~$\Omega$ which is an eigenvector of the diagonal elements of the monodromy matrix $T(u)$ and is annihilated by the lower-triangular elements:
\begin{gather}\label{ass1}
 T_{ii}(u) \Omega = \lambda_i(u) \Omega, \qquad T_{ij}(u)\Omega = 0, \quad i>j.
\end{gather}
The eigenfunctions $\lambda_{i}(u)$ vary depending on the model. Leaving them as arbitrary functions of the parameter~$u$, the model is called generalized. The cyclic vector $\Omega$ is called pseudovacuum. We suppose its gradation to be vanishing ($\Omega$~is an even supervector). The other Bethe vectors $\B_{a,b}(\bu;\bv)$ are generated by the action of the upper-triangular elements $T_{ij}(u)$, $i<j$, on~$\Omega$. They depend on two sets of spectral parameters $\bu=\{u_1,\dots,u_a\}$ and $\bv=\{v_1,\dots,v_b\}$ where $a,b=0,1,2,\dots$ are the cardinalities $a=\# \bu$ and $b=\#\bv$.

The Bethe vectors have several representations \cite{PRS16bethe}. It is convenient to use the following one for our purposes:
\begin{gather}\label{BetheVec}
\mathbb{B}_{a,b}(\bu ; \bv) = \sum K_n(\bv_{\I} | \bu_{\I}) \frac{f(\bu_{\I} , \bu_{\II}) g(\bv_{\II}, \bv_{\I})}{\lambda_2(\bu_{\II}) \lambda_2(\bv) f(\bv, \bu)} \mathbb{T}_{13}(\bv_{\I}) \mathbb{T}_{23}(\bv_{\II}) T_{12}(\bu_{\II}) \Omega.
\end{gather}
The sum goes over all partitions $\bu \Rightarrow \{\bu_{\I}, \bu_{\II} \}$ and $\bv \Rightarrow \{\bv_{\I}, \bv_{\II} \}$ with the restriction that $\# \bu_{\I} = \# \bv_{\I} = n$, where $n=0,1,\dots,\min (a,b)$. The function $K_n(\bv_{\I} | \bu_{\I})$ is the partition function of the six-vertex model with the domain wall boundary condition~\cite{Kor82}. It has the following representation~\cite{Iz87}:
\begin{gather*}
K_n(\bv|\bu) = \prod_{i<j}^n g(v_i,v_j) g(u_j,u_i) \cdot \frac{f(\bv,\bu)}{g(\bv,\bu)} \left. \det\left[ \frac{g^2(v_k,u_l)}{f(v_k,u_l)} \right] \right|_{k,l=1,\dots,n}.
\end{gather*}
 A simple observation states that the Bethe vector $\B_{a,b}(\bu;\bv)$ has the gradation $\grad(\B_{a,b})=b$ $(\mathop{\rm mod} 2)$.

Similarly, we assume the existence of the dual pseudovacuum $\Omega^\dagger$ with the properties
\begin{gather*}
 \Omega^\dagger T_{ii}(u) = \lambda_i(u) \Omega^\dagger, \qquad\Omega^\dagger T_{ij}(u)=0, \qquad i<j,
\end{gather*}
where the eigenfunctions $\lambda_i(u)$ are the same as in~\eqref{ass1}. We suppose the gradation of $\Omega^\dagger$ to be vanishing. The dual Bethe vectors $\C_{ab}(\bu;\bv)$ also depend on two sets of spectral parame\-ters~$\bu$,~$\bv$ with $a=\#\bu$, $b=\#\bv$. Their explicit form is~\cite{PRS16bethe}
\begin{gather*}
\C_{a,b}(\bar u; \bar v) = (-1)^{\frac{b^2-b}{2}} \sum K_n(\bar v_{\I} | \bar u_{\I}) \frac{f(\bar u_{\I} , \bar u_{\II}) g(\bar v_{\II}, \bar v_{\I})}{\lambda_2(\bar u_{\II}) \lambda_2(\bar v) f(\bar v, \bar u)} \Omega^\dagger T_{21}(\bar u_{\II}) \T_{32}(\bar v_{\II}) \T_{31}(\bar v_{\I}),
\end{gather*}
where the sum goes over all partitions $\bar u \Rightarrow \{\bar u_{\I}, \bar u_{\II} \}$ and $\bar v \Rightarrow \{\bar v_{\I}, \bar v_{\II} \}$ with the restriction that $\# \bar u_{\I} = \# \bar v_{\I} = n$, where $n=0,1,\dots,\min (a,b)$. The gradation of the dual Bethe vector $\C_{a,b}(\bu;\bv)$ is again $\grad(\C_{a,b})=b$ $(\mathop{\rm mod} 2)$.

If the antimorphism \eqref{amorph1} relates the pseudovacuum to the dual pseudovacuum as $\psi(\Omega) = \Omega^\dagger$, the Bethe vector $\B_{a,b}(\bar u;\bar v)$ is mapped by $\psi$ to the dual Bethe vector $\C_{a,b}(\bar u;\bar v)$
\begin{gather} \label{psiBC}
\psi\left(\B_{a,b}(\bar u;\bar v)\right) =\C_{a,b}(\bar u;\bar v).
\end{gather}

\section[Composite $\mathfrak{gl}(2|1)$-invariant model]{Composite $\boldsymbol{\mathfrak{gl}(2|1)}$-invariant model} \label{sec:gl21composite}

The interval $[0,L]$, which the generalized model is def\/ined on, is split into its two subintervals, as discussed in Introduction. The fact that the monodromy matrix of the full model $T(u)$ is simply the matrix product of the partial monodromy matrices $T^{(2)} (u) T^{(1)} (u)$, as expressed in~\eqref{monP}, follows from the coproduct~\eqref{Coproduct} in the bialgebra~$\mathcal{A}$.

This is followed by the split of the original superspace into its two graded subspaces $\mathcal{H}=\mathcal{H}^{(1)}\otimes \mathcal{H}^{(2)}$. We suppose that the graded subspaces $\mathcal{H}^{(\ell)}$, $\ell=1,2$, contain the partial pseudovacua~$\Omega^{(\ell)}$ with the properties
\begin{gather*}
 T_{ii}^{(\ell)}(u) \Omega^{(\ell)} = \lambda_i^{(\ell)}(u) \Omega^{(\ell)}, \qquad T_{ij}^{(\ell)}(u) \Omega^{(\ell)}=0, \qquad i> j.
\end{gather*}
The operators $T_{ij}^{(\ell)}(u)$ are the matrix elements of the partial monodromy matrices $T^{(\ell)}(u)$ described by \eqref{monP}. In other words, the subspaces $\mathcal{H}^{(\ell)}$ have exactly the same structure of a Fock space as the full superspace $\mathcal{H}$. The corresponding Bethe vectors are denoted as $\B^{(\ell)}_{a,b}(\bu;\bv)$. We suppose that the partial pseudovacua $\Omega^{(\ell)}$ form the total pseudovacuum $\Omega=\Omega^{(1)}\Omega^{(2)}$. We omit here the symbol $\otimes$ for the direct product of the pseudovacua because the superscripts at~$\Omega^{(\ell)}$ indicate that we work in the direct product of superspaces. This notation is kept also below for direct products of arbitrary Bethe vectors.

Similarly, we suppose for the partial dual superspaces the existence of the dual pseudova\-cua~$\Omega^{\dagger(\ell)}$ satisfying
\begin{gather*}
 \Omega^{\dagger(\ell)} T_{ii}^{(\ell)}(u) = \lambda_i^{(\ell)}(u) \Omega^{\dagger(\ell)}, \qquad \Omega^{\dagger(\ell)} T_{ij}^{(\ell)}(u) = 0, \qquad i< j.
\end{gather*}
The corresponding dual Bethe vectors are denoted as $\C^{(\ell)}_{a,b}(\bu;\bv)$ and we again suppose that $\Omega^\dagger =\Omega^{\dagger(1)}\Omega^{\dagger(2)}$.

It is useful to introduce two ratio functions $r_1(u)$ and $r_3(u)$ instead of three independent functions~$\lambda_i(u)$:
\begin{gather*}
 r_i(u) = \frac{\lambda_i(u)}{\lambda_2(u)}, \quad i=1,3, \qquad r_i^{(\ell)}(u) = \frac{\lambda^{(\ell)}_i(u)}{\lambda^{(\ell)}_2(u)}, \qquad i=1,3, \quad \ell=1,2.
\end{gather*}
This corresponds to the multiplication of the monodromy matrix by $\lambda_2^{-1}(u)$. Obviously $\lambda_{i}(u) = \lambda_i^{(1)}(u) \lambda_i^{(2)}(u)$ and $r_i(u) = r_i^{(1)}(u) r_i^{(2)} (u)$.

\subsection{Main results} \label{sec:Results21}

\begin{Theorem} \label{theorB}
The Bethe vectors of the full model can be expressed as the bilinear combination of the partial Bethe vectors:
\begin{gather} \label{BVcomp}
\mathbb{B}_{a,b}(\bu;\bv) = \sum r_1^{(2)}(\bu_{\I}) r_3^{(1)}(\bv_{\II}) \frac{f(\bu_{\II}, \bu_{\I})g(\bv_{\I},\bv_{\II})}{f(\bv_{\II},\bu_{\I})} \mathbb{B}_{a_2,b_2}^{(2)}(\bu_{\II}; \bv_{\II}) \mathbb{B}^{(1)}_{a_1,b_1}(\bu_{\I}; \bv_{\I}).
\end{gather}
The summation goes over all partitions $\bar u \Rightarrow \{\bu_{\I}, \bu_{\II}\}$ and $\bv \Rightarrow \{\bv_{\I}, \bv_{\II}\}$ with no restriction. The corresponding cardinalities satisfy $a_1+a_2=a$ and $b_1+b_2=b$.
\end{Theorem}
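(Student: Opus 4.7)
The natural approach is to expand the explicit formula \eqref{BetheVec} for $\mathbb{B}_{a,b}(\bu;\bv)$, apply the coproduct \eqref{Coproduct} to every generator $T_{ij}(w)$ appearing there, and rearrange the resulting terms as a bilinear combination of partial Bethe vectors, each again written via \eqref{BetheVec} but on $\mathcal{H}^{(1)}$ or $\mathcal{H}^{(2)}$. This is the supersymmetric counterpart of the strategy used for $\mathfrak{gl}(3)$ in \cite{PRS15gl3compBV}.

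Concretely, I would proceed in three stages. First, apply $\Delta$ to each of $T_{12}(\bu_{\II})$, $\mathbb{T}_{13}(\bv_{\I})$ and $\mathbb{T}_{23}(\bv_{\II})$ in \eqref{BetheVec}. For a single generator the coproduct contributes a sum over an internal index $k\in\{1,2,3\}$; for a product it becomes a sum over ``colourings'' of the parameters by $k$, equivalently over partitions of each of the sets $\bu_{\II}$, $\bv_{\I}$, $\bv_{\II}$ into three subsets. Second, using $[T^{(1)}_{kl}(u), T^{(2)}_{ij}(v)\}=0$, push every second-copy operator to the left of every first-copy operator, tracking the gradation signs coming from odd generators, and then act on $\Omega=\Omega^{(1)}\Omega^{(2)}$. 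Lower-triangular generators annihilate the corresponding partial pseudovacuum, so only the colourings in which each $T^{(\ell)}_{ij}$ is diagonal or strictly upper-triangular on~$\Omega^{(\ell)}$ contribute. Each surviving diagonal generator produces a factor $\lambda^{(\ell)}_{i}$; combining these with the prefactor $1/\bigl(\lambda_2(\bu_{\II})\lambda_2(\bv)\bigr)$ in \eqref{BetheVec} and using $\lambda_i=\lambda_i^{(1)}\lambda_i^{(2)}$ yields exactly the ratios $r_1^{(2)}(\bu_{\I})$ and $r_3^{(1)}(\bv_{\II})$ of the target formula \eqref{BVcomp}. Third, after this selection, each $u_j\in\bu$ and each $v_j\in\bv$ is sorted into the first or the second subspace, producing the outer partitions $\bu\Rightarrow\{\bu_{\I},\bu_{\II}\}$ and $\bv\Rightarrow\{\bv_{\I},\bv_{\II}\}$ of \eqref{BVcomp}. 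Within each subspace, the remaining inner sum over further partitions, carrying the Izergin--Korepin kernel $K_n$ of \eqref{BetheVec}, has to be matched with the inner structure of the partial Bethe vectors $\mathbb{B}^{(\ell)}_{a_\ell,b_\ell}(\bu_{\ell};\bv_{\ell})$, and the residual products of $f$, $g$ and $h$ over the four subsets must recombine into the single coefficient $f(\bu_{\II},\bu_{\I})g(\bv_{\I},\bv_{\II})/f(\bv_{\II},\bu_{\I})$.

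The hard part is the combinatorial and sign bookkeeping. Substitution of the coproduct into the symmetrized products $\mathbb{T}_{i3}(\bar{w})$, which are built from odd generators, produces nontrivial sign prefactors under the reordering of $T^{(2)}$ and $T^{(1)}$ factors, and the denominator $\prod h$ in $\mathbb{T}_{i3}(\bar{w})$ splits nontrivially across the partition subsets; only after these pair correctly with the analogous factors entering the partial $\mathbb{T}^{(\ell)}_{i3}$ inside each $\mathbb{B}^{(\ell)}_{a_\ell,b_\ell}$ does the coefficient collapse to that of \eqref{BVcomp}. This sign bookkeeping, which is absent in the bosonic $\mathfrak{gl}(3)$ calculation, is precisely where the technical novelty of the proof is concentrated and where Section~\ref{sec:Action} is expected to play its role.
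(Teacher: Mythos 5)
Your plan is a genuinely different route from the paper's proof, which proceeds by induction on $b$: it defines the candidate $\mathcal{B}_{a,b}$ by the right-hand side of \eqref{BVcomp}, checks the initial condition $\mathcal{B}_{a,0}=\B_{a,0}$ against the known $\mathfrak{gl}(2)$ composite model, and verifies that $\mathcal{B}_{a,b}$ satisfies the recursion \eqref{RecRelBV} generated by $T_{23}(z)$, using only the coproduct of single generators and the action formulas of Appendix~\ref{app:ActionMon}. As written, however, your direct-expansion plan contains a step that fails. You assert that after pushing all second-copy operators to the left and acting on $\Omega^{(1)}\Omega^{(2)}$, only the colourings in which every $T^{(\ell)}_{ij}$ is diagonal or strictly upper-triangular survive, because lower-triangular generators annihilate the partial pseudovacua. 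This is not so. The coproduct of the creation operators entering \eqref{BetheVec} does produce lower-triangular factors,
\begin{gather*}
\Delta(T_{12}(u)) = T_{12}^{(1)}(u)T_{11}^{(2)}(u) + T_{22}^{(1)}(u)T_{12}^{(2)}(u) + T_{32}^{(1)}(u)T_{13}^{(2)}(u),\\
\Delta(T_{23}(v)) = T_{13}^{(1)}(v)T_{21}^{(2)}(v) + T_{23}^{(1)}(v)T_{22}^{(2)}(v) + T_{33}^{(1)}(v)T_{23}^{(2)}(v),
\end{gather*}
and in the expansion of a product such as $\mathbb{T}_{13}(\bv_{\I})\,\mathbb{T}_{23}(\bv_{\II})\,T_{12}(\bu_{\II})$ the factors $T_{32}^{(1)}$ and $T_{21}^{(2)}$ sit in the middle of a string of other generators of the same copy, not adjacent to $\Omega^{(\ell)}$. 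They do not annihilate the state; they must be commuted to the right through the remaining operators using \eqref{Product}, which generates a large number of additional contributions. Discarding these colourings outright gives a wrong answer already for small $a$, $b$.

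The second gap is that the entire content of the theorem is concentrated in the step you defer: showing that the inner sums over partitions carrying the kernels $K_n$ reassemble into the partial Bethe vectors $\B^{(\ell)}$ with precisely the coefficient $r_1^{(2)}(\bu_{\I})\, r_3^{(1)}(\bv_{\II})\, f(\bu_{\II},\bu_{\I})\, g(\bv_{\I},\bv_{\II})/f(\bv_{\II},\bu_{\I})$. You offer no mechanism for this resummation, and there is no simple splitting identity for $K_n$ that makes it routine. The paper avoids both difficulties: in the inductive step the unwanted terms cancel in pairs and triples by the antisymmetry of $g$ and the identity $g(v,z)g(v,u)+g(v,z)g(u,z)+g(z,u)g(v,u)=0$, so no manipulation of $K_n$ is ever needed. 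To salvage your approach you would either have to carry out the full commutation analysis (in effect re-deriving the multiple-action formulas of \cite{HLPRS16multiple}) or invoke the coproduct property of weight functions as in \cite{KP08}, which is the alternative route mentioned for $\mathfrak{gl}(3)$ in \cite{PRS15gl3compBV}.
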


The coproduct used in the proof is the standard one~\eqref{Coproduct}, as can be seen in Section~\ref{sec:Action}.

\begin{proof}[Idea of the proof] The proof is based on a recursion relation enjoined by Bethe vectors of $\mathfrak{gl}(2|1)$-invariant models~\cite{PRS16bethe}
\begin{gather}
\frac{T_{23}(z)}{\lambda_2(z)h(\bv,z)} \mathbb{B}_{a,b-1}(\bu;\bv) = f(z,\bu) \mathbb{B}_{a,b}(\bu; \{z,\bv\}) \nonumber\\
\hphantom{\frac{T_{23}(z)}{\lambda_2(z)h(\bv,z)} \mathbb{B}_{a,b-1}(\bu;\bv) =}{} + \sum_{\bu \Rightarrow \{u_0,\bu_0\}} g(u_0,z) f(u_0,\bu_0) \frac{T_{13}(z)\mathbb{B}_{a-1,b-1}(\bu_0;\bv)}{\lambda_2(z)h(\bv,z)}.\label{RecRelBV}
\end{gather}
Here, the sum is taken over all partitions $\bu\Rightarrow\{u_0,\bu_0\}$ where $\# u_0=1.$

Let us def\/ine the following vectors contained in the composite model, i.e., in $\mathcal{H}^{(1)}\otimes \mathcal{H}^{(2)}$:
\begin{gather}\label{Conjecture}
\mathcal{B}_{a,b}(\bu;\bv) = \sum r_1^{(2)}(\bu_{\I}) r_3^{(1)}(\bv_{\II}) \frac{f(\bu_{\II}, \bu_{\I})g(\bv_{\I},\bv_{\II})}{f(\bv_{\II},\bu_{\I})} \mathbb{B}^{(2)}(\bu_{\II}; \bv_{\II}) \mathbb{B}^{(1)}(\bu_{\I}; \bv_{\I}).
\end{gather}
The subscripts $a_1$, $b_1$, $a_2$, $b_2$ of the partial Bethe vectors $\B^{(1)}(\bu_{\I}; \bv_{\I})$ and $\B^{(2)}(\bu_{\II}; \bv_{\II})$ in the def\/inition of $\mathcal{B}_{a,b}(\bu;\bv)$ are omitted here and below because they do not carry any important information. Obviously, if we prove that vector~\eqref{Conjecture} satisf\/ies recursion \eqref{RecRelBV} and an initial condition $\mathcal{B}_{a,0}(\bu;\varnothing) =\mathbb{B}_{a,0}(\bu;\varnothing)$, $a=0,1,\dots$, then, using induction over $b$, we immediately obtain that $\mathcal{B}_{a,b}(\bu;\bv) =\mathbb{B}_{a,b}(\bu;\bv)$ for $a$ and $b$ arbitrary.

It is easy to see that $\mathcal{B}_{a,0}(\bu;\varnothing)$ coincides with the known result for the composite $\mathfrak{gl}(2)$-invariant model \cite{IK84}
\begin{gather*}
\mathcal{B}_{a,0}(\bu;\varnothing) = \sum r_1^{(2)}(\bu_{\I}) f(\bu_{\II},\bu_{\I}) \B^{(2)}(\bu_{\II};\varnothing) \B^{(1)}(\bu_{\I};\varnothing) = \B_{a,0}(\bu;\varnothing).
\end{gather*}
Thus, the initial condition is satisf\/ied. It remains to prove that $\mathcal{B}_{a,b}(\bu;\bv)$ satisf\/ies the recursion
\begin{gather}
\frac{T_{23}(z)}{\lambda_2(z)h(\bv,z)} \mathcal{B}_{a,b-1}(\bu;\bv) = f(z,\bu) \mathcal{B}_{a,b}(\bu; \{z,\bv\}) \nonumber\\
\hphantom{\frac{T_{23}(z)}{\lambda_2(z)h(\bv,z)} \mathcal{B}_{a,b-1}(\bu;\bv) =}{} + \sum_{\bu \Rightarrow \{u_0,\bu_0\}} g(u_0,z) f(u_0,\bu_0) \frac{T_{13}(z)\mathcal{B}_{a-1,b-1}(\bu_0;\bv)}{\lambda_2(z)h(\bv,z)}. \label{RecRel}
\end{gather}
Here the notation is the same as in~\eqref{RecRelBV}.

One can easily convince oneself that recursion relation~\eqref{RecRel} is a simple consequence of the following two relations which we intend to prove in Section~\ref{sec:Action}:
\begin{gather}\label{T13Bcal0}
\frac{T_{13}(z)}{\lambda_{2}(z)h(\bv,z)} \mathcal{B}_{a-1,b-1}(\bu;\bv) = \mathcal{B}_{a,b}(\{z,\bu\}; \{z,\bv\}),\\
\frac{T_{23}(z)}{\lambda_2(z)h(\bv,z)} \mathcal{B}_{a,b-1}(\bu;\bv) = f(z,\bu) \mathcal{B}_{a,b}(\bu; \{z,\bv\})\nonumber \\
\hphantom{\frac{T_{23}(z)}{\lambda_2(z)h(\bv,z)} \mathcal{B}_{a,b-1}(\bu;\bv) =}{} + \sum_{\bu\Rightarrow\{u_0,\bu_0\} } g(u_0,z) f(u_0,\bu_0) \mathcal{B}_{a,b}(\{z,\bu_0\};\{z,\bv\}).\label{T23Bcal}
\end{gather}
The sum is performed here over all partitions of the type $\bu\Rightarrow\{u_0,\bu_0\}$ where $\# u_0=1$. Thus, the proof of equations~\eqref{T13Bcal0},~\eqref{T23Bcal} yields the proof of~\eqref{BVcomp}.
\end{proof}

\begin{Remark} We can commute the factors $\B^{(1)}_{a_1,b_1}(\bu_{\I}; \bv_{\I})$ and $\B^{(2)}_{a_2,b_2}(\bu_{\II}; \bv_{\II})$ in \eqref{BVcomp} according to
\begin{gather*}
g(\bv_{\I},\bv_{\II}) \B_{a_2,b_2}^{(2)}(\bu_{\II}; \bv_{\II}) \B^{(1)}_{a_1,b_1}(\bu_{\I}; \bv_{\I}) = g(\bv_{\II},\bv_{\I}) \B^{(1)}_{a_1,b_1}(\bu_{\I}; \bv_{\I}) \B_{a_2,b_2}^{(2)}(\bu_{\II}; \bv_{\II})
\end{gather*}
because their gradation is ref\/lected in the product of the antisymmentric functions $g(\bv_{\I},\bv_{\II})$.
\end{Remark}

\begin{Theorem} \label{theorC}
The dual Bethe vectors of the full model can be expressed as the bilinear combination of the partial dual Bethe vectors:
\begin{gather*}
\C_{a,b}(\bu;\bv )= \sum r_1^{(1)}(\bu_{\II}) r_3^{(2)}(\bv_{\I}) \frac{f(\bu_{\I}, \bu_{\II}) g(\bv_{\II},\bv_{\I})}{f(\bv_{\I},\bu_{\II})} \C_{a_1,b_1}^{(1)}(\bu_{\I}; \bv_{\I}) \C^{(2)}_{a_2,b_2}(\bu_{\II}; \bv_{\II}).
\end{gather*}
The summation goes over all partitions $\bu \Rightarrow \{\bu_{\I}, \bu_{\II}\}$ and $\bv \Rightarrow \{\bv_{\I}, \bv_{\II}\}$. The corresponding cardinalities satisfy $a_1+a_2=a$ and $b_1+b_2=b$.
\end{Theorem}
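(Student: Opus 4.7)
The plan is to derive Theorem~\ref{theorC} from Theorem~\ref{theorB} by applying the antimorphism $\psi$ of \eqref{amorph1} to both sides of the decomposition \eqref{BVcomp}. By \eqref{psiBC}, the left-hand side becomes $\C_{a,b}(\bu;\bv)$. On the right-hand side the rational prefactors are $\psi$-inert, and the essential transformation is on the operator factor. Since \eqref{BVcomp} is the state-space realisation of the standard coproduct $\Delta$, the composition rule \eqref{keyCP}, $\Delta\circ\psi=(\psi\otimes\psi)\circ\Delta'$, converts this into the realisation of the opposite coproduct $\Delta'$. Comparing \eqref{Coproduct} with \eqref{OpCopr} shows that $\Delta'$ differs from $\Delta$ precisely by interchanging the roles of the two copies of $\mathcal{A}$: a creation-type operator that lands on copy $\ell$ under $\Delta$ lands on the complementary copy under $\Delta'$, with a corresponding exchange of the diagonal factors producing $r_i^{(\ell)}$'s with swapped superscripts. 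This is the conceptual origin of the replacements $r_1^{(2)}\mapsto r_1^{(1)}$, $r_3^{(1)}\mapsto r_3^{(2)}$ that distinguish Theorem~\ref{theorC} from Theorem~\ref{theorB}.

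Concretely, I write $\B_{a,b}(\bu;\bv)=A(\bu;\bv)\Omega$ for an appropriate $A\in\mathcal{A}$, so that $\C_{a,b}(\bu;\bv)=\Omega^{\dagger}\psi(A)$ by \eqref{psiBC}. The composite-model expression is
\[
\C_{a,b}(\bu;\bv)=\bigl(\Omega^{\dagger(1)}\otimes\Omega^{\dagger(2)}\bigr)\,\Delta\bigl(\psi(A)\bigr)=\bigl(\Omega^{\dagger(1)}\otimes\Omega^{\dagger(2)}\bigr)\,(\psi\otimes\psi)\,\Delta'(A),
\]
and substituting the opposite-coproduct expansion of $A$ yields a sum over partitions in which each term is a product of partial dual Bethe vectors $\C^{(1)}$ and $\C^{(2)}$. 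The graded antimorphism rule in \eqref{amorph1} produces overall signs $(-1)^{b_1 b_2}$ when reversing the order of tensor factors whose gradations are $b_1, b_2 \pmod 2$; these signs are absorbed by the antisymmetry $g(\bv_{\I},\bv_{\II})=(-1)^{b_1 b_2}g(\bv_{\II},\bv_{\I})$. A final relabeling of the partition dummies $\bu_{\I}\leftrightarrow\bu_{\II}$, $\bv_{\I}\leftrightarrow\bv_{\II}$ (legitimate because the sum is unrestricted) brings the expression to the form stated in Theorem~\ref{theorC}.

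The main obstacle is therefore the careful bookkeeping of all gradation-induced signs together with the label swap dictated by \eqref{keyCP}: both effects are non-trivial in isolation, but they must conspire to leave precisely the prefactor predicted by Theorem~\ref{theorC} with no residual signs. As a self-contained alternative paralleling the proof of Theorem~\ref{theorB}, one could instead establish the dual-side analogue of the recursion \eqref{RecRelBV}---namely a right-multiplication recursion for $\C_{a,b}(\bu;\bv)$ by $T_{32}(z)/(\lambda_2(z)h(\bv,z))$---and verify that the conjectured bilinear expression satisfies it, together with the initial condition $\C_{a,0}(\bu;\varnothing)$ reducing to the known dual $\mathfrak{gl}(2)$ composite formula from~\cite{IK84}.
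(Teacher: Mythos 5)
Your proposal is correct and follows essentially the same route as the paper's own proof: write $\B_{a,b}=B_{a,b}\Omega$, apply the antimorphism $\psi$ together with \eqref{psiBC}, use the composition rule \eqref{keyCP} to replace $\Delta\circ\psi$ by $(\psi\otimes\psi)\circ\Delta'$ (which swaps the superscripts on the $r_i$'s and on the partial operator factors), and finish by relabeling $\bu_{\I}\leftrightarrow\bu_{\II}$, $\bv_{\I}\leftrightarrow\bv_{\II}$. Your observation that the gradation signs are absorbed by the antisymmetry of $g(\bv_{\I},\bv_{\II})$ matches the paper's Remark following Theorem~\ref{theorB}.
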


\begin{proof}
The Bethe vector $\B_{a,b}(\bar u;\bar v)$ can be understood as
\begin{gather*}
\B_{a,b}(\bu;\bv) = B_{a,b}(\bu; \bv) \Omega
\end{gather*}
where $B_{a,b}(\bu; \bv)$ is a polynomial in elements of the bialgebra $\mathcal{A}$ which acts on the pseudova\-cuum~$\Omega$. In other words, it is the rest of the Bethe vector~\eqref{BetheVec} if we erase the pseudovacuum. Hence, formula \eqref{BVcomp} for Bethe vectors in the composite model can be written as
\begin{gather*}
\B_{a,b}(\bu;\bv) = B_{a,b}(\bu;\bv)\Omega = \Delta(B_{a,b}(\bu;\bv)) \Omega^{(1)} \Omega^{(2)} \\
\hphantom{\B_{a,b}(\bu;\bv)}{} = \sum r_1^{(2)}(\bu_{\I}) r_3^{(1)}(\bv_{\II}) \frac{f(\bu_{\II}, \bu_{\I})g(\bv_{\I},\bv_{\II})}{f(\bv_{\II},\bu_{\I})} B^{(2)}_{a_2,b_2}(\bu_{\II};\bv_{\II}) B^{(1)}_{a_1,b_1}(\bu_{\I};\bv_{\I}) \Omega^{(1)}\Omega^{(2)}.
\end{gather*}

The antimorphism \eqref{amorph1} relates the Bethe vectors to the dual Bethe vectors, cf.~\eqref{psiBC},
\begin{gather*}
\C_{a,b}(\bu;\bv) = \psi(\B_{a,b}(\bu;\bv)) = \psi(\Omega) \psi(B_{a,b}(\bu;\bv)) = \Omega^\dagger \psi(B_{a,b}(\bu;\bv)).
\end{gather*}
Due to the composition rule \eqref{keyCP} for the antimorphism $\psi$ with the coproducts~\eqref{Coproduct} and~\eqref{OpCopr}, we obtain for the composite model
\begin{gather*}
\C_{a,b}(\bu;\bv) = \Omega^{\dagger(1)}\Omega^{\dagger(2)} \left[ \Delta\circ \psi( B_{a,b}(\bu;\bv) ) \right] = \Omega^{\dagger(1)}\Omega^{\dagger(2)} \left[ (\psi\otimes \psi)\circ \Delta' (B_{a,b}(\bu;\bv)) \right] \\
= \Omega^{\dagger(1)} \Omega^{\dagger(2)} \left[ (\psi\otimes \psi)
 \sum r_1^{(1)}(\bu_{\I}) r_3^{(2)}(\bv_{\II}) \frac{f(\bu_{\II}, \bu_{\I})g(\bv_{\I},\bv_{\II})}{f(\bv_{\II},\bu_{\I})}
 B^{(1)}_{a_2,b_2}(\bu_{\II};\bv_{\II}) B^{(2)}_{a_1,b_1}(\bu_{\I};\bv_{\I}) \right].
\end{gather*}
We stress that the factors $ B_{a_2,b_2}^{(2)} (\bu_{\II};\bv_{\II}) B_{a_1,b_1}^{(1)} (\bu_{\I};\bv_{\I})$ are changed to $ B_{a_2,b_2}^{(1)} (\bu_{\II};\bv_{\II}) B_{a_1,b_1}^{(2)} (\bu_{\I};\bv_{\I})$ in comparison with Theorem~\ref{theorB}. In the same way, the functions $r_1^{(2)} (\bu_{\I}) r_3^{(1)} (\bv_{\II})$ are changed to $r_1^{(1)} (\bu_{\I}) r_3^{(2)} (\bv_{\II})$. This is due to the use of the opposite coproduct~\eqref{OpCopr} instead of the standard one~\eqref{Coproduct}. After the application of $\psi\otimes \psi$
\begin{gather*}
\C_{a,b}(\bu;\bv)= \sum r_1^{(1)}(\bu_{\I}) r_3^{(2)}(\bv_{\II}) \frac{f(\bu_{\II}, \bu_{\I})g(\bv_{\I},\bv_{\II})}{f(\bv_{\II},\bu_{\I})} \C^{(1)}_{a_2,b_2}(\bu_{\II};\bv_{\II}) \C^{(2)}_{a_1,b_1}(\bu_{\I};\bv_{\I}).
\end{gather*}
Renaming \looseness=1 the sets of variables as $\bu_{\I}\leftrightarrow \bu_{\II}$, $\bv_{\I}\leftrightarrow \bv_{\II}$, we arrive at the statement of the theo\-rem.
\end{proof}

\section[Action of $T_{13}(z)$ and $T_{23}(z)$ on $\mathcal{B}_{a,b}(\bu;\bv)$]{Action of $\boldsymbol{T_{13}(z)}$ and $\boldsymbol{T_{23}(z)}$ on $\boldsymbol{\mathcal{B}_{a,b}(\bu;\bv)}$}\label{sec:Action}

We aim to prove that the supervectors \eqref{Conjecture} satisfy
\begin{gather}\label{T13Bcal}
\frac{T_{13}(z)}{\lambda_{2}(z)h(\bv,z)} \mathcal{B}_{a-1,b-1}(\bu;\bv) = \mathcal{B}_{a,b}(\bar \eta; \bar \xi),
\end{gather}
where we introduce new sets of spectral parameters $\bar \eta=\{z,\bar u\}$ and $\bar \xi =\{z,\bar v\}$. Equation~\eqref{T13Bcal} is just one of the properties satisf\/ied by the Bethe vector $\B_{a-1,b-1}(\bar u;\bar v)$, as remarked in Appendix~\ref{app:ActionMon}. The strategy of the proof is simple. We investigate both sides of~\eqref{T13Bcal} separately and then show that they coincide. To this end, we use the known formulas for the action of the monodromy matrix elements on the Bethe vectors listed in Appendix~\ref{app:ActionMon}.

The right-hand side of \eqref{T13Bcal} has the form
\begin{gather*}
\mathcal{B}_{a,b}(\bar \eta; \bar \xi) = \sum r_1^{(2)}(\bar\eta_{\I}) r_3^{(1)}(\bar\xi_{\II})
 \frac{f(\bar\eta_{\II},\bar\eta_{\I})g(\bar\xi_{\I},\bar\xi_{\II})}{f(\bar\xi_{\II},\bar\eta_{\I})}
\B^{(2)} (\bar\eta_{\II};\bar\xi_{\II}) \B^{(1)} (\bar\eta_{\I};\bar\xi_{\I}),
\end{gather*}
where we just used def\/inition \eqref{Conjecture}. From the analysis how the parameter $z$ can enter the subsets $\bar\eta_{\I}$, $\bar\xi_{\I}$, $\bar\eta_{\II}$, $\bar\xi_{\II}$ we obtain three cases:
\begin{alignat*}{6}
&(i) \quad && \bar\eta_{\I} = \{z,\bu_{\I}\},\qquad && \bar\xi_{\I} = \{z,\bv_{\I}\}, \qquad && \bar\eta_{\II} = \bu_{\II},\qquad && \bar\xi_{\II} = \bv_{\II}, & \\
&(ii) \quad && \bar\eta_{\I} = \bu_{\I}, \qquad && \bar\xi_{\I} = \bv_{\I}, \qquad && \bar\eta_{\II} = \{z,\bu_{\II}\},\qquad && \bar\xi_{\II} = \{z,\bv_{\II}\}, & \\
&(iii) \quad && \bar\eta_{\I} = \bu_{\I}, \qquad && \bar\xi_{\I} = \{z,\bv_{\I}\}, \qquad && \bar\eta_{\II} = \{z,\bu_{\II}\}, \qquad && \bar\xi_{\II} = \bv_{\II}.&
\end{alignat*}
The case, where $z\in\bar\xi_{\II}$ and $z\in\bar\eta_{\I}$, gives a vanishing contribution because of the function $f(\bar\xi_{\II},\bar\eta_{\I})$ in the denominator. The vector $\mathcal{B}_{a,b}(\bar \eta; \bar \xi)$ is thus composed of three parts with dif\/ferent structure
\begin{gather*}
\mathcal{B}_{a,b}(\bar \eta; \bar \xi) = A_1 + A_2 + A_3,
\end{gather*}
where
\begin{gather*}
A_1 = \sum r_1^{(2)}(z) r_1^{(2)}(\bu_{\I}) r_3^{(1)}(\bv_{\II}) \frac{f(\bu_{\II},z) f(\bu_{\II},\bu_{\I}) g(\bv_{\I},\bv_{\II}) g(z,\bv_{\II})}{f(\bv_{\II},\bu_{\I}) f(\bv_{\II},z)} \\
\hphantom{A_1 =}{} \times \B^{(2)}(\bu_{\II};\bv_{\II}) \B^{(1)}(\{z,\bu_{\I}\};\{z,\bv_{\I}\}), \\
A_2 = \sum r_1^{(2)}(\bu_{\I}) r_3^{(1)}(z) r_3^{(1)}(\bv_{\II}) \frac{f(\bu_{\II},\bu_{\I}) g(\bv_{\I},z) g(\bv_{\I},\bv_{\II})}{f(\bv_{\II},\bu_{\I})} \B^{(2)}(\{z,\bu_{\II}\};\{z,\bv_{\II}\}) \B^{(1)}(\bu_{\I};\bv_{\I}), \\
A_3 = \sum r_1^{(2)}(\bu_{\I}) r_3^{(1)}(\bv_{\II}) \frac{f(z,\bu_{\I}) f(\bu_{\II},\bu_{\I}) g(z,\bv_{\II}) g(\bv_{\I},\bv_{\II})}{f(\bv_{\II},\bu_{\I})} \B^{(2)}(\{z,\bu_{\II}\};\bv_{\II}) \B^{(1)}(\bu_{\I};\{z,\bv_{\I}\}).
\end{gather*}

As the supervector $\mathcal{B}_{a-1,b-1}(\bu;\bv)$ belongs to $\mathcal{H}^{(1)}\otimes \mathcal{H}^{(2)}$, the action of $T_{13}(z)$ on it is def\/ined via a coproduct. We use the standard one~\eqref{Coproduct}
\begin{gather*}
\Delta(T_{13}(z)) = T_{13}^{(1)}(z)T_{11}^{(2)}(z) + T_{23}^{(1)}(z)T_{12}^{(2)}(z) + T_{33}^{(1)}(z)T_{13}^{(2)}(z).
\end{gather*}
Hence, the left-hand side of \eqref{T13Bcal} decomposes into three parts
\begin{gather}
 \frac{T_{13}(z)}{\lambda_{2}(z)h(\bar v,z)} \mathcal{B}_{a-1,b-1}(\bar u;\bar v) = C_1 +C_2 + C_3\nonumber \\
 \qquad {} = \left(\frac{ T_{13}^{(1)}(z)T_{11}^{(2)}(z)}{\lambda_2(z)h(\bar v,z)} + \frac{ T_{23}^{(1)}(z)T_{12}^{(2)}(z)}{\lambda_2(z)h(\bar v,z)} + \frac{ T_{33}^{(1)}(z)T_{13}^{(2)}(z)}{\lambda_2(z)h(\bar v,z)} \right) \mathcal{B}_{a-1,b-1}(\bar u;\bar v).\label{ch2}
\end{gather}
The specif\/ic parts $C_k$, $k=1,2,3$, can be written as
\begin{gather*}
 C_k = \sum (-1)^{(1+[k])\cdot b_2} r_1^{(2)}(\bar u_{\I}) r_3^{(1)}(\bar v_{\II}) \frac{f(\bar u_{\II}, \bar u_{\I})g(\bar v_{\I},\bar v_{\II})}{f(\bar v_{\II},\bar u_{\I})} \\
\hphantom{C_k =}{} \times \frac{T_{1k}^{(2)}(z)}{\lambda_2^{(2)}(z)h(\bar v_{\II},z)} \mathbb{B}^{(2)}(\bar u_{\II};\bar v_{\II})
 \frac{T_{k3}^{(1)}(z)}{\lambda_2^{(1)}(z)h(\bar v_{\I},z)}\mathbb{B}^{(1)}(\bar u_{\I};\bar v_{\I}),
\end{gather*}
where $b_2=\# \bv_{\II}$. The sign factor in $C_k$ appears because of the oddness of the monodromy matrix element~$T_{i3}(z)$, $i=1,2$. It is absorbed during the calculations by the antisymmetric functions $g(u,v)$.

Using formulas for the action of the monodromy matrix elements on the Bethe vectors listed in Appendix~\ref{app:ActionMon}, we obtain the explicit forms of~$C_k$:
\begin{gather*}
 C_1 = \sum_{\substack{ \bar u\Rightarrow \{ \bar u_{\I},\bar u_{\II}\} \\ \bar v\Rightarrow \{\bar v_{\I},\bar v_{\II}\}}}
r_1^{(2)}(\bar u_{\I}) r_3^{(1)}(\bar v_{\II}) \frac{f(\bar u_{\II}, \bar u_{\I})g(\bar v_{\I},\bar v_{\II}) }{f(\bar v_{\II},\bar u_{\I}) }
\Bigg\{ r_1^{(2)}(z) \frac{f(\bar u_{\II},z) g(z, \bar v_{\II})}{f(\bar v_{\II},z)} \mathbb{B}^{(2)}(\bar u_{\II}; \bar v_{\II}) \\
\hphantom{C_1=}{} +\sum_{\bar u_{\II}\Rightarrow \{u_{\is},\bar u_{\ii}\}} r_1^{(2)}(u_{\is}) \frac{f(\bar u_{\ii},u_{\is}) g(z,u_{\is})g(z,\bar v_{\II})}{f(\bar v_{\II},u_{\is})} \mathbb{B}^{(2)}(\{z,\bar u_{\ii}\};\bar v_{\II}) \\
\hphantom{C_1=}{}
 + \!\sum_{\substack{\bar u_{\II}\Rightarrow \{u_{\is},\bar u_{\ii}\} \\ \bar v_{\II}\Rightarrow \{v_{\is},\bar v_{\ii}\}}}\! r_1^{(2)}(u_{\is}) \frac{f(\bar u_{\ii},u_{\is}) g(v_{\is},z) g(v_{\is},\bar v_{\ii})}{f(\bar v_{\ii},u_{\is}) h(v_{\is},z) h(v_{\is},u_{\is})} \mathbb{B}^{(2)}(\{z,\bar u_{\ii}\};\{z,\bar v_{\ii}\}) \Bigg\}
 \mathbb{B}^{(1)}(\{z,\bar u_{\I}\}; \{z,\bar v_{\I}\}),\!
\\
 C_2 = \sum_{\substack{ \bar u\Rightarrow \{ \bar u_{\I},\bar u_{\II}\} \\ \bar v\Rightarrow \{\bar v_{\I},\bar v_{\II}\}}}
 r_1^{(2)}(\bar u_{\I}) r_3^{(1)}(\bar v_{\II}) \frac{f(\bar u_{\II}, \bar u_{\I})g(\bar v_{\I},\bar v_{\II})}{f(\bar v_{\II},\bar u_{\I})} \\
\hphantom{C_2=}{}
\times \Bigg\{ g(z,\bar v_{\II}) \mathbb{B}^{(2)}(\{z,\bar u_{\II}\};\bar v_{\II}) + \sum_{\bar v_{\II}\Rightarrow \{v_{\is},\bar v_{\ii}\}} \frac{g(v_{\is},z) g(v_{\is},\bar v_{\ii})}{h(v_{\is},z)} \mathbb{B}^{(2)}(\{z,\bar u_{\II}\};\{z,\bar v_{\ii}\}) \Bigg\} \\
\hphantom{C_2=}{}
\times \Bigg\{ f(z,\bar u_{\I}) \mathbb{B}^{(1)}(\bar u_{\I}; \{z,\bar v_{\I}\}) + \sum_{\bar u_{\I}\Rightarrow \{u_{\is},\bar u_{\ii}\}} g(u_{\is},z) f(u_{\is},\bar u_{\ii}) \mathbb{B}^{(1)}(\{z,\bar u_{\ii}\};\{z,\bar v_{\I}\}) \Bigg\},\\
 C_3 = \sum_{\substack{ \bar u\Rightarrow \{ \bar u_{\I},\bar u_{\II}\} \\ \bar v\Rightarrow \{\bar v_{\I},\bar v_{\II}\}}} r_1^{(2)}(\bar u_{\I}) r_3^{(1)}(\bar v_{\II}) \frac{f(\bar u_{\II}, \bar u_{\I})g(\bar v_{\I},\bar v_{\II})}{f(\bar v_{\II},\bar u_{\I})} \mathbb{B}^{(2)}(\{z,\bar u_{\II}\};\{z,\bar v_{\II}\}) \\
\hphantom{C_3=}{}
 \times \Bigg\{ r_3^{(1)}(z) g(\bar v_{\I},z) \mathbb{B}^{(1)}(\bar u_{\I};\bar v_{\I}) + \!\sum_{\bar v_{\I}\Rightarrow \{v_{\is},\bar v_{\ii}\}} \! r_3^{(1)}(v_{\is}) \frac{f(z,\bar u_{\I}) g(z,v_{\is}) g(\bar v_{\ii},v_{\is})}{h(v_{\is},z)f(v_{\is},\bar u_{\I})} \B^{(1)}(\bar u_{\I}; \{z,\bar v_{\ii}\}) \\
\hphantom{C_3=}{}
 + \sum_{\substack{\bar u_{\I}\Rightarrow \{u_{\is},\bar u_{\ii}\} \\ \bar v_{\I}\Rightarrow \{v_{\is},\bar v_{\ii}\}}} r_3^{(1)}(v_{\is}) \frac{g(u_{\is},z)f(u_{\is},\bar u_{\ii})g(z,v_{\is})g(\bar v_{\ii},v_{\is}) }{h(v_{\is},u_{\is}) f(v_{\is},z)f(v_{\is},\bar u_{\ii}) } \B^{(1)}(\{z,\bar u_{\ii}\};\{z,\bar v_{\ii}\}) \Bigg\}.
\end{gather*}
The summations are performed over all possible partitions of the sets $\bu$, $\bv$ of the original Bethe parameters into their subsets $\bar u\Rightarrow \{ \bar u_{\I},\bar u_{\II}\}$, $\bar v\Rightarrow \{\bar v_{\I},\bar v_{\II}\}$. Some of these subsets are divided into additional subsets, e.g., $\bar u_{\II}\Rightarrow \{u_{\is},\bar u_{\ii}\}$ where $\# u_{\is}=1$, and the summation is performed again over all such partitions. The same for the other additional divisions of $\bar v_{\II}$, $\bar u_{\I}$, $\bar v_{\I}$.

We can moreover see that the sum over partitions in $C_1$ involving the product of Bethe vectors $\mathbb{B}^{(2)}(\bar u_{\II}; \bar v_{\II})\mathbb{B}^{(1)}(\{z,\bar u_{\I}\}; \{z,\bar v_{\I}\})$ coincides with the term $A_1$. The sum over partitions involving $\mathbb{B}^{(2)}(\{z,\bar u_{\II}\};\bar v_{\II}) \mathbb{B}^{(1)}(\bar u_{\I};\{z,\bar v_{\I}\})$ in $C_2$ coincides with $A_3$. Similarly, the sum over partitions in~$C_3$ containing $\mathbb{B}^{(2)}(\{z,\bar u_{\II}\};\{z,\bar v_{\II}\})\mathbb{B}^{(1)}(\bar u_{\I};\bar v_{\I})$ coincides with~$A_2$. The remaining terms of~$C_1$,~$C_2$,~$C_3$ cancel mutually, as we show below.

There are two remaining terms containing the product of Bethe vectors of this type
\begin{gather*}
\B^{(2)}(\{z,\bu'\};\{z,\bv'\}) \B^{(1)}(\bu'';\{z,\bv''\}),
\end{gather*}
where primes mean any subset of $\bu$ or $\bv$. Namely, the f\/irst term comes from $C_2$
\begin{gather*}
C_{2,3} = \sum_{\substack{ \bar u\Rightarrow \{ \bar u_{\I},\bar u_{\II}\} \\ \bar v\Rightarrow \{\bar v_{\I},v_{\is},\bar v_{\ii}\}}} r_1^{(2)}(\bar u_{\I}) r_3^{(1)}(v_{\is}) r_3^{(1)}(\bar v_{\ii})
\frac{f(\bu_{\II},\bu_{\I}) g(\bv_{\I},\bv_{\ii}) g(\bv_{\I},v_{\is}) g(v_{\is},\bv_{\ii}) f(z,\bu_{\I}) }{ f(\bv_{\ii},\bu_{\I}) f(v_{\is},\bu_{\I}) h(v_{\is},z)} \\
\hphantom{C_{2,3} =}{}
\times g(v_{\is},z) \mathbb{B}^{(2)}(\{z,\bu_{\II}\};\{z,\bv_{\ii}\}) \mathbb{B}^{(1)}(\bu_{\I}; \{z,\bv_{\I}\})
\end{gather*}
and the second term from $C_3$
\begin{gather}
C_{3,2} = \sum_{\substack{ \bar u\Rightarrow \{ \bar u_{\I},\bar u_{\II}\} \\ \bar v\Rightarrow \{v_i,\bv_{\ii},\bv_{\II}\}}} r_1^{(2)}(\bar u_{\I}) r_3^{(1)}(v_{\is}) r_3^{(1)}(\bar v_{\II})
 \frac{f(\bar u_{\II}, \bar u_{\I})g(\bar v_{\ii},\bar v_{\II}) g(v_{\is},\bv_{\II}) g(\bar v_{\ii},v_{\is}) f(z,\bar u_{\I})}{f(\bar v_{\II},\bar u_{\I}) f(v_{\is},\bar u_{\I}) h(v_{\is},z)} \nonumber\\
\hphantom{C_{3,2} =}{}
\times g(z,v_{\is}) \B^{(2)}(\{z,\bu_{\II}\};\{z,\bv_{\II}\}) \B^{(1)}(\bu_{\I}; \{z,\bv_{\ii}\}).\label{C32}
\end{gather}
Renaming the sets $\bv_{\II} \rightarrow \bv_{\ii}$ and $\bv_{\ii} \rightarrow \bv_{\I}$ in \eqref{C32}, we obtain
\begin{gather*}
C_{3,2} = \sum_{\substack{ \bar u\Rightarrow \{ \bar u_{\I},\bar u_{\II}\} \\ \bar v\Rightarrow \{v_{\is},\bv_{\I},\bar v_{\ii}\}}} r_1^{(2)}(\bar u_{\I}) r_3^{(1)}(v_{\is}) r_3^{(1)}(\bar v_{\ii})
\frac{f(\bu_{\II},\bu_{\I}) g(\bv_{\I},\bv_{\ii}) g(v_{\is},\bv_{\ii}) g(\bv_{\I},v_{\is}) f(z,\bu_{\I}) }{ f(\bv_{\ii},\bu_{\I}) f(v_{\is},\bu_{\I}) h(v_{\is},z)} \\
\hphantom{C_{3,2} =}{}
\times g(z,v_{\is}) \mathbb{B}^{(2)}(\{z,\bu_{\II}\};\{z,\bv_{\ii}\}) \mathbb{B}^{(1)}(\bu_{\I};\{z,\bv_{\I}\}).
\end{gather*}
Due to the antisymmetry of the function $g(z,v_{\is})$, we see that $C_{2,3}+C_{3,2}=0$.

There are two terms containing the product of Bethe vectors of the type
\begin{gather*}
\B^{(2)}(\{z,\bu'\}; \bv') \B^{(1)}(\{z,\bu''\};\{z,\bv''\}).
\end{gather*}
One such term is contained in $C_1$ and one in $C_2$. One can prove that their sum vanishes by similar argumentation as above.

The remaining three terms contain Bethe vectors of the type
\begin{gather*}
 \B^{(2)}(\{z,\bu'\}; \{z,\bv'\}) \B^{(1)}(\{z,\bu''\};\{z,\bv''\}).
\end{gather*}
They are the following:
\begin{gather}\label{C13}
C_{1,3} = \sum_{\substack{ \bar u\Rightarrow \{ \bar u_{\I},u_{\is},\bar u_{\ii}\} \\ \bar v\Rightarrow \{\bar v_{\I},v_{\is},\bar v_{\ii}\}}}
r_1^{(2)}(\bar u_{\I}) r_1^{(2)}(u_{\is}) r_3^{(1)}(\bar v_{\ii}) r_3^{(1)}(v_{\is})
\frac{f(u_{\is},\bu_{\I}) f(\bu_{\ii},\bu_{\I}) f(\bar u_{\ii},u_{\is}) }{f(\bv_{\ii},\bu_{\I}) f(v_{\is},\bu_{\I}) f(\bar v_{\ii},u_{\is}) f(v_{\is},u_{\is})} \\
\hphantom{C_{1,3} =}{} \times \frac{g(\bv_{\I},\bv_{\ii}) g(\bv_{\I},v_{\is}) g(v_{\is},\bar v_{\ii})}{h(v_{\is},z)} g(v_{\is},z) g(v_{\is},u_{\is})
\B^{(2)}(\{z,\bar u_{\ii}\};\{z,\bar v_{\ii}\}) \B^{(1)}(\{z,\bar u_{\I}\}; \{z,\bar v_{\I}\}),\nonumber\\
C_{2,4} = \sum_{\substack{ \bar u\Rightarrow \{ u_{\is}, \bu_{\ii},\bar u_{\II}\} \\ \bar v\Rightarrow \{\bar v_{\I},v_{\is}, \bv_{\ii}\}}} r_1^{(2)}(\bu_{\ii}) r_1^{(2)}(u_{\is}) r_3^{(1)}(\bv_{\ii}) r_3^{(1)}(v_{\is})
\frac{f(\bu_{\II},\bu_{\ii}) f(\bu_{\II},u_{\is}) f(u_{\is},\bar u_{\ii}) }{ f(\bv_{\ii},\bu_{\ii}) f(\bv_{\ii},u_{\is}) f(v_{\is},\bu_{\ii}) f(v_{\is},u_{\is}) } \label{C24}\\
\hphantom{C_{2,4} =}{} \times \frac{g(\bv_{\I},\bv_{\ii}) g(\bv_{\I},v_{\is}) g(v_{\is},\bar v_{\ii})}{h(v_{\is},z)} g(v_{\is},z) g(u_{\is},z)
\B^{(2)}(\{z,\bu_{\II}\};\{z,\bv_{\ii}\}) \B^{(1)}(\{z,\bu_{\ii}\};\{z,\bv_{\I}\}),\nonumber
\\
\label{C33}
C_{3,3} = \sum_{\substack{ \bar u\Rightarrow \{ u_{\is},\bu_{\ii},\bu_{\II}\} \\ \bar v\Rightarrow \{v_{\is},\bar v_{\ii},\bar v_{\II}\}}}
r_1^{(2)}(\bu_{\ii}) r_1^{(2)}(u_{\is}) r_3^{(1)}(\bar v_{\II}) r_3^{(1)}(v_{\is})
\frac{f(\bu_{\II},\bu_{\ii}) f(\bu_{\II},u_{\is}) f(u_{\is},\bar u_{\ii})}{f(\bv_{\II},\bu_{\ii}) f(\bv_{\II},u_{\is}) f(v_{\is},\bar u_{\ii}) f(v_{\is},u_{\is})} \\
\hphantom{C_{3,3} =}{}
\times \frac{ g(\bv_{\ii},\bv_{\II}) g(v_{\is},\bv_{\II}) g(\bar v_{\ii},v_{\is})}{ h(v_{\is},z)}
 g(z,u_{\is}) g(v_{\is},u_{\is}) \B^{(2)}(\{z,\bar u_{\II}\};\{z,\bar v_{\II}\}) \B^{(1)}(\{z,\bar u_{\ii}\};\{z,\bar v_{\ii}\}).\nonumber
\end{gather}
We rename the sets as $\bu_{\II}\rightarrow \bu_{\ii}$ and $\bu_{\ii} \rightarrow \bu_{\I}$ in \eqref{C24}. In \eqref{C33}, we rename the sets as $\bu_{\II},\bv_{\II} \rightarrow \bu_{\ii},\bv_{\ii}$ and $\bu_{\ii},\bv_{\ii} \rightarrow \bu_{\I},\bv_{\I}$. Thus, we obtain all the Bethe vectors in \eqref{C13}--\eqref{C33} with the same arguments. Due to the identity
\begin{gather*}
g(v_{\is},z) g(v_{\is},u_{\is}) + g(v_{\is},z) g(u_{\is},z) + g(z,u_{\is}) g(v_{\is},u_{\is}) = 0,
\end{gather*}
 we see that $C_{1,3}+C_{2,4}+C_{3,3}=0$. Equality~\eqref{T13Bcal} is thus proved.

The action of $T_{23}(z)$ on $\mathcal{B}_{a,b-1}(\bu;\bv)$ described in \eqref{T23Bcal} can be proved in a similar manner. We again use the coproduct~\eqref{Coproduct} to def\/ine the action of $T_{23}(z)$ on the direct product of two partial Bethe vectors $\B^{(2)}_{a_2,b_2}(\bu_{\II};\bv_{\II}) \B^{(1)}_{a_1,b_1}(\bu_{\I};\bv_{\I})$ and formulas from Appendix~\ref{app:ActionMon} for the action of the monodromy matrix elements on Bethe vectors. The case of~$T_{23}(z)$ involves more lengthy calculations than that of~$T_{13}(z)$. As the reasoning is rather similar, we do not provide the details here.

\section[Composite $\mathfrak{gl}(1|2)$-invariant model]{Composite $\boldsymbol{\mathfrak{gl}(1|2)}$-invariant model} \label{sec:gl12basics}

Let us denote the RTT algebra corresponding to $\mathfrak{gl}(1|2)$ as $\wt{\mathcal{A}}$ and its elements as $\widetilde{T}_{ij}(u)$ to distinguish them from their $\mathfrak{gl}(2|1)$ equivalents. The gradation in the $\mathfrak{gl}(1|2)$ case is described by the parity function $\wt{[\ ]}$, where $\widetilde{[1]}=0$ and $\widetilde{[2]}=\widetilde{[3]}=1$. The algebraic structure is governed by the bilinear relation \eqref{Product} provided that all relevant objects are marked by tildas.

For the $\mathfrak{gl}(1|2)$-invariant models the pseudovacuum is denoted as $\wt{\Omega}$ and the corresponding eigenvalues of $\widetilde{T}_{jj}(u)$, $j=1,2,3$, are $\widetilde{\lambda}_j(u)$. The dual pseudovacuum is denoted as $\wt{\Omega}^\dagger$.

There is a relation between gradations on $\mathcal{A}$ and $\wt{\mathcal{A}}$: $[i]=\widetilde{[4-i]}+1$ $(\mathop{\rm{mod}} 2)$, $i=1,2,3$.

The superalgebras $\mathcal{A}$ and $\wt{\mathcal{A}}$ are isomorphic, as was shown in \cite{PRS16bethe}. This is due to the map
\begin{gather*}
\varphi\colon \ \begin{cases}
\mathcal{A} \rightarrow \wt{\mathcal{A}}, \\
T_{ij}(u) \rightarrow (-1)^{[i][j]+[j]+1} \widetilde{ T}_{4-j, 4-i}(u), \\
\lambda_{j}(u) \rightarrow -\widetilde{\lambda}_{4-j}(u) = \lambda_{j}(u).
\end{cases}
\end{gather*}
The ratio function $r_i(u)$, $i=1,3$, is mapped to $\wt{r}_{4-i}(u)=\wt{\lambda}_{4-i}(u)/\wt{\lambda}_{2}(u)$. Moreover $\varphi$ is a~homomorphism $\varphi(AB)=\varphi(A)\varphi(B)$. The map
\begin{gather*}
\widetilde{\varphi} \colon \
\begin{cases}
\wt{\mathcal{A}} \rightarrow \mathcal{A}, \\
\widetilde{T}_{ij}(u) \rightarrow (-1)^{\widetilde{[i]}\widetilde{[j]}+\widetilde{[j]}+1} T_{4-j,4-i}(u), \\
\wt{\lambda_{j}}(u) \rightarrow -\lambda_{4-j}(u) = \wt{\lambda}_{j}(u)
\end{cases}
\end{gather*}
is inverse to $\varphi$, i.e., $\widetilde{\varphi}\circ \varphi = \mathop{\mathrm{ id}}$ and $\varphi\circ \widetilde{\varphi} = \widetilde{ \mathop{\mathrm{ id}}}$, where $\mathop{\mathrm{ id}}$ is the identity map on $\mathcal{A}$ and $\wt{\mathop{\mathrm{ id}}}$ is the identity map on $\wt{\mathcal{A}}$.

The (dual) Bethe vectors for the $\mathfrak{gl}(1|2)$-invariant models were constructed in \cite{PRS16bethe}:
\begin{gather*}
 \widetilde{\B}_{a,b}(\bu;\bv) = (-1)^{a} \sum \frac{g(\bu_{\I},\bv_{\I}) f(\bv_{\I},\bv_{\II}) g(\bu_{\II},\bu_{\I}) h(\bv_{\I},\bv_{\I})}{\widetilde{\lambda}_2(\bu_{\II}) \widetilde{\lambda}_2(\bv) f(\bu,\bv)} \widetilde{\T}_{13}(\bv_{\I}) \widetilde{T}_{23}(\bv_{\II}) \widetilde{\T}_{12}(\bu_{\II}) \widetilde{\Omega},\\ 
 \widetilde{\C}_{a,b}(\bu;\bv) = (-1)^{\frac{a(a-1)}{2}} \sum \frac{g(\bu_{\I},\bv_{\I}) f(\bv_{\I},\bv_{\II}) g(\bu_{\II},\bu_{\I}) h(\bv_{\I},\bv_{\I})}{\widetilde{\lambda}_2(\bu_{\II}) \widetilde{\lambda}_2(\bv) f(\bu,\bv)} \widetilde{\Omega}^\dagger \widetilde{\T}_{21}(\bu_{\II}) \widetilde{T}_{32}(\bv_{\II}) \widetilde{\T}_{31}(\bv_{\I}),
\end{gather*}
where the sums go over all partitions $\bu\Rightarrow\{\bu_{\I},\bu_{\II}\}$ and $\bv\Rightarrow\{\bv_{\I},\bv_{\II}\}$ under the constraint $\#\bu_{\I}=\#\bv_{\I}$. We remind that the products of the odd operators $\wt{T}_{1i}(u)$ and $\wt{T}_{i1}(u)$, $i=2,3$, are symmetrised
\begin{gather*}
\wt{\mathbb{T}}_{1i}(\bar u) \equiv \frac{\wt{T}_{1i}(u_1)\wt{T}_{1i}(u_2)\cdots \wt{T}_{1i}(u_n)}{\prod\limits_{1\leq j<k \leq n} h(u_k,u_j)}, \qquad \wt{\mathbb{T}}_{i1}(\bar u) \equiv \frac{\wt{T}_{i1}(u_1)\wt{T}_{i1}(u_2)\cdots \wt{T}_{i1}(u_n)}{\prod\limits_{1\leq j<k \leq n} h(u_j,u_k)}.
\end{gather*}

If we assume that $\varphi(\Omega)=\widetilde{\Omega}$ and $\varphi(\Omega^\dagger)=\widetilde{\Omega}^\dagger$, it can be shown that
\begin{gather*}
\varphi(\B_{a,b}(\bu;\bv)) = \widetilde{\B}_{b,a}(\bv;\bu), \qquad \varphi(\C_{a,b}(\bu;\bv)) = \widetilde{\C}_{b,a}(\bv;\bu).
\end{gather*}

The coalgebraic structure on $\wt{\mathcal{A}}$ and $\mathcal{A}$ is related by the isomorphism $\varphi$ in the following way:
\begin{gather}\label{Dphi2}
\wt{\Delta}\circ \varphi = (\varphi\otimes \varphi) \circ \Delta',
\end{gather}
where $\Delta'$ is the opposite coproduct \eqref{OpCopr} on $\mathcal{A}$ and $\wt{\Delta}$ is the standard coproduct on~$\wt{\mathcal{A}}$
\begin{gather*}
\wt{\Delta}(\wt{T}_{ij}(u)) = - \wt{T}_{kj}(u) \otimes \wt{T}_{ik}(u).
\end{gather*}
It seems useful to incorporate the minus in the def\/inition of~$\wt{\Delta}$. This has a consequence for the pseudovacuum eigenvalues in the composite model: $\wt{\lambda}_j(u)=-\wt{\lambda}^{(1)}_j (u)\wt{\lambda}^{(2)}_j (u)$. On the other hand, the ratio functions satisfy the same relation as for $\mathfrak{gl}(2|1)$: $\wt{r}_j(u)=\wt{r}^{(1)}_j (u)\wt{r}^{(2)}_j (u)$.

\subsection{Main results} \label{sec:Results12}

From the above remarks and results of Section \ref{sec:Results21}, we can conclude about the form of Bethe vectors in the $\mathfrak{gl}(1|2)$-invariant composite model. Similarly to the case of Theorem~\ref{theorC}, the proof of the following two theorems is based on the composition rule for the isomorphism $\varphi$ with the coproducts $\wt{\Delta}$ and $\Delta'$.

\begin{Theorem} \label{theorB12}
The Bethe vectors of the full model can be expressed as the bilinear combination of the partial Bethe vectors:
\begin{gather*}
\wt{\B}_{a,b}(\bu;\bv) = \sum \wt{r}_3^{(1)}(\bv_{\II}) \wt{r}_1^{(2)}(\bu_{\I}) \frac{f(\bv_{\I}, \bv_{\II})g(\bu_{\II},\bu_{\I})}{f(\bu_{\I},\bv_{\II})} \wt{\B}^{(1)}_{a_1,b_1}(\bu_{\I}; \bv_{\I}) \wt{\B}_{a_2,b_2}^{(2)}(\bu_{\II}; \bv_{\II}) .
\end{gather*}
The summation goes over all partitions $\bar u \Rightarrow \{\bu_{\I}, \bu_{\II}\}$ and $\bv \Rightarrow \{\bv_{\I}, \bv_{\II}\}$ with no restriction. The corresponding cardinalities satisfy $a_1+a_2=a$ and $b_1+b_2=b$.
\end{Theorem}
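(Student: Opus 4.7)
The plan is to derive Theorem~\ref{theorB12} from Theorem~\ref{theorB} by transporting the result through the isomorphism $\varphi\colon \mathcal{A}\to\wt{\mathcal{A}}$, in direct analogy with the derivation of Theorem~\ref{theorC} from Theorem~\ref{theorB} via the antimorphism $\psi$. The starting identity is $\wt{\B}_{a,b}(\bu;\bv) = \varphi(\B_{b,a}(\bv;\bu))$, which is just the reverse reading of $\varphi(\B_{a,b}(\bu;\bv)) = \wt{\B}_{b,a}(\bv;\bu)$ stated at the end of Section~\ref{sec:gl12basics}.

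First I would write $\B_{b,a}(\bv;\bu) = B_{b,a}(\bv;\bu)\,\Omega$ for the polynomial $B_{b,a}(\bv;\bu)\in\mathcal{A}$ generating the Bethe vector, and use $\varphi(\Omega)=\wt{\Omega}=\wt{\Omega}^{(1)}\wt{\Omega}^{(2)}$ to rewrite
$$
\wt{\B}_{a,b}(\bu;\bv) = \wt{\Delta}\bigl(\varphi(B_{b,a}(\bv;\bu))\bigr)\,\wt{\Omega}^{(1)}\wt{\Omega}^{(2)}.
$$
The composition rule \eqref{Dphi2}, $\wt{\Delta}\circ\varphi = (\varphi\otimes\varphi)\circ\Delta'$, then converts the right-hand side into $(\varphi\otimes\varphi)\bigl(\Delta'(B_{b,a}(\bv;\bu))\bigr)\,\wt{\Omega}^{(1)}\wt{\Omega}^{(2)}$. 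At this point I would expand $\Delta'(B_{b,a}(\bv;\bu))\,\Omega^{(1)}\Omega^{(2)}$ using Theorem~\ref{theorB} read with the opposite coproduct: as already noted in the proof of Theorem~\ref{theorC}, replacing $\Delta$ by $\Delta'$ in the bilinear decomposition~\eqref{BVcomp} swaps only the superscripts $(1)\leftrightarrow(2)$ on the partial factors, while all scalar coefficients remain unchanged. Applied to $B_{b,a}(\bv;\bu)$ (that is, to \eqref{BVcomp} with $\bu\leftrightarrow\bv$ and $a\leftrightarrow b$), this produces a sum of terms of the shape $B^{(1)}_{b_2,a_2}(\bv_{\II};\bu_{\II})\,B^{(2)}_{b_1,a_1}(\bv_{\I};\bu_{\I})\,\Omega^{(1)}\Omega^{(2)}$.

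In the last step I apply $\varphi\otimes\varphi$ to each summand. On each tensor component one has $\varphi(B^{(\ell)}_{b',a'}(\bv';\bu'))\,\wt{\Omega}^{(\ell)} = \wt{\B}^{(\ell)}_{a',b'}(\bu';\bv')$ by the same relation used globally, and the scalar coefficients transform under $r^{(\ell)}_i\mapsto \wt{r}^{(\ell)}_{4-i}$, so that $r^{(1)}_1\mapsto\wt{r}^{(1)}_3$ and $r^{(2)}_3\mapsto\wt{r}^{(2)}_1$. A final renaming $\bu_{\I}\leftrightarrow\bu_{\II}$, $\bv_{\I}\leftrightarrow\bv_{\II}$ of the partition labels brings the expression into the form asserted by Theorem~\ref{theorB12}.

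The main obstacle is sign bookkeeping. Three independent sources of signs must be tracked in parallel: the graded tensor products in $\Delta'$ acting on the odd generators $T_{i3}(u)$ and $T_{3i}(u)$; the overall minus built into $\wt{\Delta}$ (which is responsible for $\wt{\lambda}_j=-\wt{\lambda}^{(1)}_j\wt{\lambda}^{(2)}_j$ while leaving the ratios $\wt{r}_j=\wt{r}^{(1)}_j\wt{r}^{(2)}_j$ intact); and the $\mathbb{Z}_2$-grading $b'\bmod 2$ of each partial Bethe vector, which generates signs whenever the two partial factors are commuted past one another to reach the prescribed order $\wt{\B}^{(1)}\wt{\B}^{(2)}$. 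These three contributions are expected to combine so as to reproduce precisely the scalar coefficients displayed in the statement, but the verification requires the same antisymmetry manipulations as those used in the Remark following Theorem~\ref{theorB}.
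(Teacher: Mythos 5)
Your proposal follows essentially the same route as the paper: start from $\wt{\B}_{a,b}(\bu;\bv)=\varphi(\B_{b,a}(\bv;\bu))$, apply the composition rule $\wt{\Delta}\circ\varphi=(\varphi\otimes\varphi)\circ\Delta'$, expand with the opposite coproduct (which swaps the superscripts $(1)\leftrightarrow(2)$ on both the partial factors and the $r$-functions), apply $\varphi\otimes\varphi$ using $r^{(\ell)}_i\mapsto\wt r^{(\ell)}_{4-i}$, and relabel the partition subsets. The closing worry about commuting the two partial factors is unnecessary in this derivation, since the relabeling $\bu_{\I}\leftrightarrow\bu_{\II}$, $\bv_{\I}\leftrightarrow\bv_{\II}$ already produces the stated ordering, but this does not affect the correctness of the argument.
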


\begin{proof} Let $\wt{B}_{ab}(\bu;\bv)$ denote the polynomial in elements of $\wt{\mathcal{A}}$ which acts on the pseudovacuum, i.e., $\wt{\B}_{ab}(\bu;\bv)=\wt{B}_{ab}(\bu;\bv)\wt{\Omega}$. We use the composition rule~\eqref{Dphi2} and the results of Theorem~\ref{theorB}
\begin{gather*}
\wt{\B}_{a,b}(\bu;\bv) = \wt{B}_{a,b}(\bu;\bv) \wt{\Omega} = \wt{\Delta}(\wt{B}_{a,b}(\bu;\bv)) \wt{\Omega}^{(1)} \wt{\Omega}^{(2)} = \big[\wt{\Delta}\circ\varphi (B_{b,a}(\bv;\bu)) \big] \wt{\Omega}^{(1)} \wt{\Omega}^{(2)} \\
 = \big[(\varphi\otimes \varphi)\circ \Delta' (B_{b,a}(\bv;\bu)) \big] \wt{\Omega}^{(1)} \wt{\Omega}^{(2)} \\
 = \left[ (\varphi\otimes \varphi) \sum r_1^{(1)}(\bv_{\I}) r_3^{(2)}(\bu_{\II}) \frac{f(\bv_{\II},\bv_{\I})g(\bu_{\I},\bu_{\II})}{f(\bu_{\II},\bv_{\I})} B^{(1)}_{b_2,a_2}(\bv_{\II};\bu_{\II}) B^{(2)}_{b_1,a_1}(\bv_{\I};\bu_{\I})\right] \wt{\Omega}^{(1)} \wt{\Omega}^{(2)} .
\end{gather*}
We stress that there was again used the opposite coproduct \eqref{OpCopr} in contrast to Theorem~\ref{theorB}. The application of the map $\varphi\otimes \varphi$ maps not only the polynomials $B^{(1)}_{b_2,a_2}(\bv_{\II};\bu_{\II}) B^{(2)}_{b_1,a_1}(\bv_{\I};\bu_{\I})$ to $\wt{B}_{a_2,b_2}^{(1)}(\bu_{\II}; \bv_{\II}) \wt{B}^{(2)}_{a_1,b_1}(\bu_{\I}; \bv_{\I})$ but also $r_1^{(1)}(\bv_{\I}) r_3^{(2)}(\bu_{\II})$ to $\wt{r}_3^{(1)}(\bv_{\I}) \wt{r}_1^{(2)}(\bu_{\II})$. Hence,
\begin{gather*}
\wt{\B}_{a,b}(\bu;\bv) = \sum \wt{r}_3^{(1)}(\bv_{\I}) \wt{r}_1^{(2)}(\bu_{\II}) \frac{f(\bv_{\II}, \bv_{\I})g(\bu_{\I},\bu_{\II})}{f(\bu_{\II},\bv_{\I})} \wt{\B}_{a_2,b_2}^{(1)}(\bu_{\II}; \bv_{\II}) \wt{\B}^{(2)}_{a_1,b_1}(\bu_{\I}; \bv_{\I}).
\end{gather*}
After renaming the sets of variables as $\bu_{\I}\leftrightarrow \bu_{\II}$ and $\bv_{\I}\leftrightarrow \bv_{\II}$, we arrive at the statement of the theorem.
\end{proof}

\begin{Theorem} \label{theorC12}
The dual Bethe vectors of the full model can be expressed as the bilinear combination of the partial Bethe vectors:
\begin{gather*}
\wt{\C}_{a,b}(\bu;\bv) = \sum \wt{r}_3^{(2)}(\bv_{\I}) \wt{r}_1^{(1)}(\bu_{\II})
 \frac{f(\bv_{\II}, \bv_{\I}) g(\bu_{\I},\bu_{\II})}{f(\bu_{\II},\bv_{\I})}
 \wt{\C}_{a_2,b_2}^{(2)}(\bu_{\II}; \bv_{\II}) \wt{\C}^{(1)}_{a_1,b_1}(\bu_{\I}; \bv_{\I}) .
\end{gather*}
The summation goes over all partitions $\bar u \Rightarrow \{\bu_{\I}, \bu_{\II}\}$ and $\bv \Rightarrow \{\bv_{\I}, \bv_{\II}\}$ with no restriction. The corresponding cardinalities satisfy $a_1+a_2=a$ and $b_1+b_2=b$.
\end{Theorem}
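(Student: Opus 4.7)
The plan is to mirror the passage from Theorem~\ref{theorB} to Theorem~\ref{theorB12}, using this time the already-established $\mathfrak{gl}(2|1)$ result for dual Bethe vectors (Theorem~\ref{theorC}) as input. Write $\wt{\C}_{a,b}(\bu;\bv) = \wt{\Omega}^{\dagger}\,\wt{C}_{a,b}(\bu;\bv)$, where $\wt{C}_{a,b}(\bu;\bv)$ is the polynomial in generators of $\wt{\mathcal{A}}$ whose left action on $\wt{\Omega}^{\dagger}$ produces the dual Bethe vector. From $\varphi(\C_{b,a}(\bv;\bu)) = \wt{\C}_{a,b}(\bu;\bv)$ and $\varphi(\Omega^\dagger) = \wt{\Omega}^\dagger$ one reads off $\wt{C}_{a,b}(\bu;\bv) = \varphi\bigl(C_{b,a}(\bv;\bu)\bigr)$.

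In the composite model one then has $\wt{\C}_{a,b}(\bu;\bv) = \wt{\Omega}^{\dagger(1)}\wt{\Omega}^{\dagger(2)}\,\wt{\Delta}\bigl(\wt{C}_{a,b}(\bu;\bv)\bigr)$, and the composition rule \eqref{Dphi2} converts this into
\begin{gather*}
\wt{\C}_{a,b}(\bu;\bv) = \wt{\Omega}^{\dagger(1)}\wt{\Omega}^{\dagger(2)}\,(\varphi\otimes\varphi)\bigl(\Delta'(C_{b,a}(\bv;\bu))\bigr).
\end{gather*}
Theorem~\ref{theorC}, applied with the substitutions $(a,b,\bu,\bv)\mapsto(b,a,\bv,\bu)$, gives the decomposition of $\Delta(C_{b,a}(\bv;\bu))$ as a sum of tensor products $C^{(1)}\otimes C^{(2)}$ of partial polynomials weighted by ratio functions and rational coefficients. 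The opposite coproduct $\Delta'$ is obtained by graded transposition of the two tensor slots, which at the level of the composite model amounts to swapping the labels $(\ell=1)\leftrightarrow(\ell=2)$ on the partial polynomials together with the corresponding ratio-function superscripts.

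It remains to apply $\varphi\otimes\varphi$ factorwise. Each partial polynomial $C^{(\ell)}_{b_i,a_i}(\bv_j;\bu_j)$ is sent to $\wt{C}^{(\ell)}_{a_i,b_i}(\bu_j;\bv_j)$, and so after left multiplication by $\wt{\Omega}^{\dagger(\ell)}$ reproduces the partial dual Bethe vector $\wt{\C}^{(\ell)}_{a_i,b_i}(\bu_j;\bv_j)$. Simultaneously, the ratio functions transform as $\varphi(r^{(\ell)}_j)=\wt{r}^{(\ell)}_{4-j}$, so $r^{(1)}_1\mapsto\wt{r}^{(1)}_3$ and $r^{(2)}_3\mapsto\wt{r}^{(2)}_1$, exactly the combination appearing in the statement. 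A final harmless relabelling $\bu_{\I}\leftrightarrow\bu_{\II}$, $\bv_{\I}\leftrightarrow\bv_{\II}$ of the summation variables produces the theorem.

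The main technical hazard is the careful bookkeeping of grading signs as one passes through both the transposition $\Delta\to\Delta'$ and the map $\varphi\otimes\varphi$, together with the extra minus sign hidden in the chosen definition $\wt{\Delta}(\wt{T}_{ij}(u)) = -\wt{T}_{kj}(u)\otimes \wt{T}_{ik}(u)$ that forces $\wt{\lambda}_j = -\wt{\lambda}^{(1)}_j\wt{\lambda}^{(2)}_j$ while leaving the ratios $\wt{r}_j$ multiplicative. As in the proofs of Theorems~\ref{theorC} and~\ref{theorB12}, all such signs are absorbed by the antisymmetry of $g(u,v)$ and by the compensating minus signs in the normalization of partial eigenvalues, so no separate sign analysis beyond what appears there is required.
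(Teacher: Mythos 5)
Your proposal follows essentially the same route as the paper's own proof: write $\wt{\C}_{a,b}(\bu;\bv)=\varphi(\C_{b,a}(\bv;\bu))$, pass to the composite model via $\wt{\Delta}$, convert with the composition rule \eqref{Dphi2} to $(\varphi\otimes\varphi)\circ\Delta'$ acting on $C_{b,a}(\bv;\bu)$, read off $\Delta'$ from Theorem~\ref{theorC} by swapping the tensor slots, apply $\varphi\otimes\varphi$ factorwise, and relabel the partition subsets. The only blemish is a bookkeeping slip in the superscripts of the ratio functions (you apply $\varphi$ before the slot swap but quote the post-swap combination); once the order of the two operations is fixed, the result $\wt{r}_3^{(2)}\,\wt{r}_1^{(1)}$ of the statement comes out exactly as in the paper.
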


\begin{proof} We use the results of Theorem~\ref{theorC} and the composition rule~\eqref{Dphi2}. Everything is ana\-logous to the proof of the previous theorem
\begin{gather*}
\wt{\C}_{a,b}(\bu;\bv) = \varphi(\C_{b,a}(\bv;\bu)) = \wt{\Omega}^\dagger \varphi(C_{b,a}(\bv;\bu))
 = \wt{\Omega}^{\dagger(1)} \wt{\Omega}^{\dagger(2)} \big[ \wt{\Delta} \circ \varphi (C_{b,a}(\bv;\bu)) \big] \\
 = \wt{\Omega}^{\dagger(1)} \wt{\Omega}^{\dagger(2)} \big[ (\varphi\otimes \varphi) \circ \Delta' (C_{b,a}		(\bv;\bu)) \big] \\
 = \wt{\Omega}^{\dagger(1)} \wt{\Omega}^{\dagger(2)} \left[ (\varphi\otimes \varphi)
 	\sum r_1^{(2)} (\bv_{\II}) r_3^{(1)}(\bu_{\I})
 \frac{f(\bv_{\I},\bv_{\II}) g(\bu_{\II},\bu_{\I})}{f(\bu_{\I},\bv_{\II})}
 C_{b_1,a_1}^{(2)}(\bv_{\I};\bu_{\I}) C_{b_2,a_2}^{(1)}(\bv_{\II};\bu_{\II}) \right] \\
 = \sum \wt{r}_3^{(2)} (\bv_{\II}) \wt{r}_1^{(1)}(\bu_{\I})
 \frac{f(\bv_{\I},\bv_{\II}) g(\bu_{\II},\bu_{\I})}{f(\bu_{\I},\bv_{\II})}
 \C_{a_1,b_1}^{(2)}(\bu_{\I};\bv_{\I}) \C_{a_2,b_2}^{(1)}(\bu_{\II};\bv_{\II}).
\end{gather*}
Renaming the sets of variables as $\bu_{\I}\leftrightarrow \bu_{\II}$ and $\bv_{\I}\leftrightarrow \bv_{\II}$ leads to the statement of the theorem.
\end{proof}

\section{Conclusion}

We have obtained explicit formulas for Bethe vectors for the composite $\mathfrak{gl}(2|1)$- and $\mathfrak{gl}(1|2)$-invariant generalized quantum integrable models. The method of calculation was straightforward. We used the known action of the mo\-nodromy matrix elements on the Bethe vectors~\cite{HLPRS16multiple}. Since the RTT algebra $\mathcal{A}$ has the structure of a bialgebra, we expressed the action of the mo\-nodromy matrix elements of the complete model on the tensor product of the superspaces of the partial models using the coproduct in $\mathcal{A}$. The corresponding dual Bethe vectors were obtained using a certain antimorphism of $\mathcal{A}$. Similarly, the (dual) Bethe vectors for the $\mathfrak{gl}(1|2)$-invariant model were obtained with the help of isomorphism of the RTT algebras $\mathcal{A}$ and $\wt{\mathcal{A}}$.

The authors of \cite{PRS15gl3compBV} used apart from this approach also the coproduct property of the weight functions \cite{KP08}.

We are now prepared to calculate the form factors of the partial monodromy elements $T_{ij}^{(\ell)}(u)$ in the basis of the Bethe vectors of the complete model. They allow one to calculate the form factors of local operators depending on an internal point of the original interval $[0,L]$. The correlation functions of the local operators can be consequently investigated.

Our subsequent publication \cite{FS17} is devoted to the investigation of the form factors of the partial monodromy matrix elements and local operators using the method of zero modes \cite{PRS15zeromodes}.

\appendix

\section{Action of monodromy matrix elements on Bethe vectors}\label{app:ActionMon}

We list here some useful formulas. The summation is usually performed over all partitions of the type $\bar u \Rightarrow \{u_0,\bar u_0\}$ and $\bar v \Rightarrow \{v_0,\bar v_0\}$ where $\# u_0 = \# v_0 =1$. It also happens that the summation goes over all partitions of the type $\bar u \Rightarrow \{u_0,u_1,\bar u_2\}$ with the condition $\# u_0 = \# u_1 =1$. All formulas listed in this appendix are special cases of the results obtained in~\cite{HLPRS16multiple}.

\begin{itemize}\itemsep=0pt
\item Action of the diagonal elements:
\begin{gather*}
\frac{T_{11}(z)}{\lambda_2(z)h(\bar v,z)} \mathbb{B}_{a,b}(\bar u; \bar v) = r_1(z) \frac{f(\bar u,z)}{h(\bar v,z)} \mathbb{B}_{a,b}(\bar u; \bar v) \\
\qquad{} + \sum_{\bar u\Rightarrow \{u_0,\bar u_0 \}} r_1(u_0) \frac{g(z,u_0) f(\bar u_0,u_0) g(\bar v,z)}{f(\bar v, u_0)} \mathbb{B}_{a,b}(\{z,\bar u_0\}; \bar v) \\
\qquad{} +\sum_{\substack{\bar u \Rightarrow \{u_0, \bar u_0\} \\ \bar v \Rightarrow \{v_0, \bar v_0\} }} r_{1}(u_0) \frac{f(\bar u_0,u_0) g(z,v_0) g(\bar v_0,v_0) }{f(\bar v_0, u_0) h(v_0,z) h(v_0,u_0)} \mathbb{B}_{a,b}(\{z, \bar u_0\} ; \{z,\bar v_0\}),
\\
\frac{T_{22}(z)}{\lambda_2(z)h(\bar v,z)} \mathbb{B}_{a,b}(\bar u; \bar v) = f(z,\bar u) g(\bar v, z) \mathbb{B}_{a,b}(\bar u; \bar v) \\
 \qquad{} + \sum_{\bar v\Rightarrow\{v_0,\bar v_0\}} \frac{f(z,\bar u) g(z,v_0) g(\bar v_0,v_0)}{h(v_0,z)} \mathbb{B}_{a,b}(\bar u; \{z,\bar v_0\}) \\
 \qquad{} + \sum_{\bar u\Rightarrow\{u_0,\bar u_0\}} g(u_0,z) f(u_0,\bar u_0) g(\bar v ,z) \mathbb{B}_{a,b}(\{z,\bar u_0\}; \bar v) \\
\qquad{} + \sum_{\substack{\bar u\Rightarrow\{u_0,\bar u_0\} \\ \bar v\Rightarrow\{v_0,\bar v_0\}}} \frac{g(u_0,z)f(u_0,\bar u_0) g(z,v_0)g(\bar v_0,v_0)}{h(v_0,z)} \mathbb{B}_{a,b}(\{z,\bar u_0\}; \{z,\bar v_0\}), \\
\frac{T_{33}(z)}{\lambda_2(z)h(\bar v,z)} \mathbb{B}_{a,b}(\bar u; \bar v) = r_3(z) g(\bar v, z) \mathbb{B}_{a,b}(\bar u; \bar v) \\
\qquad{} + \sum_{\bar v\Rightarrow\{v_0,\bar v_0\}} r_3(v_0) \frac{f(z,\bar u) g(z,v_0) g(\bar v_0,v_0)}{h(v_0,z) f(v_0,\bar u)} \mathbb{B}_{a,b}(\bar u; \{z, \bar v_0\}) \\
\qquad{} + \sum_{\substack{\bar u\Rightarrow\{u_0,\bar u_0\} \\ \bar v\Rightarrow\{v_0,\bar v_0\}}} r_3(v_0) \frac{g(u_0,z)f(u_0,\bar u_0) g(z,v_0)g(\bar v_0,v_0)}{h(v_0,u_0)f(v_0,z)f(v_0,\bar u_0)} \mathbb{B}_{a,b}(\{z,\bar u_0\}; \{z, \bar v_0\}).
\end{gather*}

\item Action of the upper-triangular elements:
\begin{gather*}
\frac{T_{13}(z)}{\lambda_2(z)h(\bar v,z)} \mathbb{B}_{a,b}(\bar u; \bar v)= \mathbb{B}_{a+1,b+1}(\{z,\bar u\}; \{z,\bar v \}), \\
\frac{T_{23}(z)}{\lambda_2(z)h(\bar v,z)} \mathbb{B}_{a,b}(\bar u; \bar v) \\
\qquad{} = f(z,\bar u) \mathbb{B}_{a,b+1}(\bar u; \{ z,\bar v \}) +\sum_{\bar u\Rightarrow\{u_0,\bar u_0\}} g(u_0,z) f(u_0,\bar u_0) \mathbb{B}_{a,b+1} (\{z,\bar u_0\}, \{z,\bar v\}), \\
\frac{T_{12}(z)}{\lambda_2(z)h(\bar v,z)} \mathbb{B}_{a,b}(\bar u; \bar v) \\
\qquad{}= g(\bar v, z) \mathbb{B}_{a+1,b}(\{z,\bar u\}; \bar v) + \sum_{\bar v\Rightarrow\{v_0,\bar v_0\}} \frac{g(z,v_0) g(\bar v_0,v_0)}{h(v_0,z)} \mathbb{B}_{a+1,b}(\{z,\bar u\}; \{z, \bar v_0\}).
\end{gather*}

\item Action of the lower-triangular elements:
\begin{gather*}
\frac{T_{21}(z)}{\lambda_2(z)h(\bar v,z)} \mathbb{B}_{a,b}(\bar u; \bar v) = \sum_{\bar u\Rightarrow\{u_0,\bar u_0\}} r_1(z)\frac{f(u_0,z)f(u_0,\bar u_0)f(\bar u_0,z) g(\bar v,z)}{f(\bar v,z)h(u_0,z)} \mathbb{B}_{a-1,b}(\bar u_0; \bar v) \\
\qquad{}+ \sum_{\bar u\Rightarrow\{u_0,\bar u_0\}} r_1(u_0)\frac{f(z,u_0)f(z,\bar u_0) f(\bar u_0,u_0) g(\bar v,z)}{f(\bar v,u_0)h(z,u_0)} \mathbb{B}_{a-1,b}(\bar u_0; \bar v) \\
\qquad{} + \sum_{\bar u \Rightarrow \{u_0,u_1,\bar u_2\}} r_1(u_0) \frac{f(u_1,u_0)f(u_1,\bar u_2)f(u_1,z)f(\bar u_2,u_0)f(z,u_0)g(\bar v,z)}{f(\bar v,u_0)h(u_1,z)h(z,u_0)} \\
\qquad{} \times \mathbb{B}_{a-1,b}(\{z,\bar u_2\}; \bar v) \\
\qquad{} + \sum_{\substack{\bar u\Rightarrow\{u_0,\bar u_0\} \\ \bar v\Rightarrow\{v_0,\bar v_0\}}} r_1(u_0) \frac{f(z,\bar u_0) f(\bar u_0,u_0) g(z,v_0)g(\bar v_0,v_0)}{h(v_0,z) f(\bar v_0,u_0) h(v_0,u_0)} \mathbb{B}_{a-1,b}(\bar u_0; \{z,\bar v_0\}) \\
\qquad{}+ \sum_{\substack{\bar u\Rightarrow\{u_0,u_1,\bar u_2\} \\ \bar v\Rightarrow\{v_0,\bar v_0\}}} r_1(u_0)\frac{f(u_1,u_0)g(u_1,z)f(u_1,\bar u_2)f(\bar u_2,u_0)g(z,v_0)g(\bar v_0,v_0)}{h(v_0,z)f(\bar v_0,u_0)h(v_0,u_0)} \\
 \qquad{} \times \mathbb{B}_{a-1,b}(\{z,\bar u_2\}; \{z,\bar v_0\}).
\end{gather*}
\end{itemize}

\subsection*{Acknowledgements}

The author wants to express his gratitude to N.A.~Slavnov for the proposal to investigate this topic and discussions. He thanks also to S.~Pakuliak for discussions and to A.P.~Isaev and \v{C}.~Burd\'{i}k for their support. The work of the author has been supported by the Grant Agency of the Czech Technical University in Prague, grant No.~SGS15/215/OHK4/3T/14, and by the Grant of the Plenipotentiary of the Czech Republic at JINR, Dubna.

\pdfbookmark[1]{References}{ref}
\LastPageEnding


\begin{thebibliography}{99}
\footnotesize\itemsep=0pt

\bibitem{BKZ97}
Babujian H., Karowski M., Zapletal A., Matrix dif\/ference equations and a nested
 {B}ethe ansatz, \href{http://dx.doi.org/10.1088/0305-4470/30/18/019}{\textit{J.~Phys.~A: Math. Gen.}} \textbf{30} (1997),
 6425--6450.

\bibitem{Bedurftig95}
Bed\"urftig G., Frahm H., Thermodynamics of an integrable model for electrons
 with correlated hopping, \href{https://doi.org/10.1088/0305-4470/28/16/006}{\textit{J.~Phys.~A: Math. Gen.}} \textbf{28} (1995),
 4453--4468, \href{http://arxiv.org/abs/cond-mat/9504103}{cond-mat/9504103}.

\bibitem{BPRS13BVgl3}
Belliard S., Pakuliak S., Ragoucy E., Slavnov N.A., Bethe vectors of {${\rm
 GL}(3)$}-invariant integrable models, \href{http://dx.doi.org/10.1088/1742-5468/2013/02/P02020}{\textit{J.~Stat. Mech. Theory Exp.}}
 \textbf{2013} (2013), P02020, 24~pages, \href{http://arxiv.org/abs/1210.0768}{arXiv:1210.0768}.

\bibitem{BPRS13BVgl3q}
Belliard S., Pakuliak S., Ragoucy E., Slavnov N.A., Bethe vectors of quantum
 integrable models with {${\rm GL}(3)$} trigonometric {$R$}-matrix,
 \href{http://dx.doi.org/10.3842/SIGMA.2013.058}{\textit{SIGMA}} \textbf{9} (2013), 058, 23~pages, \href{http://arxiv.org/abs/1304.7602}{arXiv:1304.7602}.

\bibitem{Bracken95}
Bracken A.J., Gould M.D., Links J.R., Zhang Y.Z., New supersymmetric and
 exactly solvable model of correlated electrons, \href{http://dx.doi.org/10.1103/PhysRevLett.74.2768}{\textit{Phys. Rev. Lett.}}
 \textbf{74} (1995), 2768--2771, \href{http://arxiv.org/abs/cond-mat/9410026}{cond-mat/9410026}.

\bibitem{Fad96}
Faddeev L.D., How the algebraic {B}ethe ansatz works for integrable models, in
 Sym\'etries Quantiques ({L}es {H}ouches, 1995), North-Holland, Amsterdam,
 1998, 149--219, \href{http://arxiv.org/abs/hep-th/9605187}{hep-th/9605187}.

\bibitem{FLT99}
Foerster A., Links J., Tonel A.P., Algebraic properties of an integrable
 {$t$}-{$J$} model with impurities, \href{http://dx.doi.org/10.1016/S0550-3213(99)00190-X}{\textit{Nuclear Phys.~B}} \textbf{552}
 (1999), 707--726, \href{http://arxiv.org/abs/cond-mat/9901091}{cond-mat/9901091}.

\bibitem{Frahm99}
Frahm H., Doped {H}eisenberg chains: spin-{$S$} generalizations of the
 supersymmetric {$t$}-{$J$} model, \href{http://dx.doi.org/10.1016/S0550-3213(99)00403-4}{\textit{Nuclear Phys.~B}} \textbf{559}
 (1999), 613--636, \href{http://arxiv.org/abs/cond-mat/9904157}{cond-mat/9904157}.

\bibitem{FS17}
Fuksa J., Slavnov N.A., Form factors of local operators in supersymmetric
 quantum integrable models, \href{http://arxiv.org/abs/1701.05866}{arXiv:1701.05866}.

\bibitem{GoKo00}
G\"ohmann F., Korepin V.E., Solution of the quantum inverse problem,
 \href{http://dx.doi.org/10.1088/0305-4470/33/6/308}{\textit{J.~Phys.~A: Math. Gen.}} \textbf{33} (2000), 1199--1220,
 \href{http://arxiv.org/abs/hep-th/9910253}{hep-th/9910253}.

\bibitem{HLPRS16FF}
Hutsalyuk A., Liashyk A., Pakuliak S.Z., Ragoucy E., Slavnov N.A., Form factors
 of the monodromy matrix entries in {${\mathfrak{gl}}(2|1)$}-invariant
 integrable models, \href{http://dx.doi.org/10.1016/j.nuclphysb.2016.08.025}{\textit{Nuclear Phys.~B}} \textbf{911} (2016), 902--927,
 \href{http://arxiv.org/abs/1607.04978}{arXiv:1607.04978}.

\bibitem{HLPRS16multiple}
Hutsalyuk A., Liashyk A., Pakuliak S.Z., Ragoucy E., Slavnov N.A., Multiple
 actions of the monodromy matrix in {$\mathfrak{gl}(2|1)$}-invariant
 integrable models, \href{http://dx.doi.org/10.3842/SIGMA.2016.099}{\textit{SIGMA}} \textbf{12} (2016), 099, 22~pages,
 \href{http://arxiv.org/abs/1605.06419}{arXiv:1605.06419}.

\bibitem{HLPRS16scalar1}
Hutsalyuk A., Liashyk A., Pakuliak S.Z., Ragoucy E., Slavnov N.A., Scalar
 products of {B}ethe vectors in models with {$\mathfrak{gl}(2|1)$} symmetry
 1.~{S}uper-analog of {R}eshetikhin formula, \href{http://dx.doi.org/10.1088/1751-8113/49/45/454005}{\textit{J.~Phys.~A: Math. Theor.}}
 \textbf{49} (2016), 454005, 28~pages, \href{http://arxiv.org/abs/1605.09189}{arXiv:1605.09189}.

\bibitem{HLPRS16scalar2}
Hutsalyuk A., Liashyk A., Pakuliak S.Z., Ragoucy E., Slavnov N.A., Scalar
 products of {B}ethe vectors in models with {$\mathfrak{gl}(2|1)$} symmetry
 2.~{D}eterminant representation, \href{http://dx.doi.org/10.1088/1751-8121/50/3/034004}{\textit{J.~Phys.~A: Math. Theor.}}
 \textbf{50} (2017), 034004, 22~pages, \href{http://arxiv.org/abs/1606.03573}{arXiv:1606.03573}.

\bibitem{Iz87}
Izergin A.G., Partition function of a six-vertex model in a f\/inite volume,
 \textit{Sov. Phys. Dokl.} \textbf{32} (1987), 878--879.

\bibitem{IK84}
Izergin A.G., Korepin V.E., The quantum inverse scattering method approach to
 correlation functions, \href{http://dx.doi.org/10.1007/BF01212350}{\textit{Comm. Math. Phys.}} \textbf{94} (1984), 67--92.

\bibitem{KP08}
Khoroshkin S., Pakuliak S., A computation of universal weight function for
 quantum af\/f\/ine algebra {$U_q(\widehat{\mathfrak{gl}}_N)$}, \href{http://dx.doi.org/10.1215/kjm/1250271413}{\textit{J.~Math.
 Kyoto Univ.}} \textbf{48} (2008), 277--321, \href{http://arxiv.org/abs/0711.2819}{arXiv:0711.2819}.

\bibitem{KMT00}
Kitanine N., Maillet J.M., Terras V., Correlation functions of the {$XXZ$}
 {H}eisenberg spin-{${1\over2}$} chain in a~magnetic f\/ield, \href{http://dx.doi.org/10.1016/S0550-3213(99)00619-7}{\textit{Nuclear
 Phys.~B}} \textbf{567} (2000), 554--582, \href{http://arxiv.org/abs/math-ph/9907019}{math-ph/9907019}.

\bibitem{Kor82}
Korepin V.E., Calculation of norms of {B}ethe wave functions, \href{http://dx.doi.org/10.1007/BF01212176}{\textit{Comm.
 Math. Phys.}} \textbf{86} (1982), 391--418.

\bibitem{KBI93}
Korepin V.E., Bogoliubov N.M., Izergin A.G., Quantum inverse scattering method
 and correlation functions, \href{http://dx.doi.org/10.1017/CBO9780511628832}{\textit{Cambridge Monographs on Mathematical Physics}},
 Cambridge University Press, Cambridge, 1993.

\bibitem{KuRe83}
Kulish P.P., Reshetikhin N.Y., Diagonalisation of {${\rm GL}(N)$} invariant
 transfer matrices and quantum {$N$}-wave system ({L}ee model),
 \href{http://dx.doi.org/10.1088/0305-4470/16/16/001}{\textit{J.~Phys.~A: Math. Gen.}} \textbf{16} (1983), L591--L596.

\bibitem{KS82solutions}
Kulish P.P., Sklyanin E.K., Solutions of the {Y}ang--{B}axter equation,
 \href{http://dx.doi.org/10.1007/BF01091463}{\textit{J.~Sov. Math.}} \textbf{19} (1982), 1596--1620.

\bibitem{MT00}
Maillet J.M., Terras V., On the quantum inverse scattering problem,
 \href{http://dx.doi.org/10.1016/S0550-3213(00)00097-3}{\textit{Nuclear Phys.~B}} \textbf{575} (2000), 627--644,
 \href{http://arxiv.org/abs/hep-th/9911030}{hep-th/9911030}.

\bibitem{PRS15gl3compBV}
Pakuliak S., Ragoucy E., Slavnov N.A., ${\rm GL}(3)$-based quantum integrable
 composite models. {I}.~{B}ethe vectors, \href{http://dx.doi.org/10.3842/SIGMA.2015.063}{\textit{SIGMA}} \textbf{11} (2015),
 063, 20 pages, \href{http://arxiv.org/abs/1501.07566}{arXiv:1501.07566}.

\bibitem{PRS15zeromodes}
Pakuliak S., Ragoucy E., Slavnov N.A., Zero modes method and form factors in
 quantum integrable models, \href{http://dx.doi.org/10.1016/j.nuclphysb.2015.02.006}{\textit{Nuclear Phys.~B}} \textbf{893} (2015),
 459--481, \href{http://arxiv.org/abs/#2}{arXiv:1412.6037}.

\bibitem{PRS16bethe}
Pakuliak S., Ragoucy E., Slavnov N.A., Bethe vectors for models based on the
 super-{Y}angian {$Y(\mathfrak{gl}(m|n))$}, \href{http://arxiv.org/abs/1604.02311}{arXiv:1604.02311}.

\bibitem{Pfannmuller96}
Pfannm\"uller M.P., Frahm H., Algebraic {B}ethe ansatz for
 {${\mathfrak{gl}}(2,1)$} invariant {$36$}-vertex models, \href{http://dx.doi.org/10.1016/0550-3213(96)00425-7}{\textit{Nuclear
 Phys.~B}} \textbf{479} (1996), 575--593, \href{http://arxiv.org/abs/cond-mat/9604082}{cond-mat/9604082}.

\bibitem{FST80}
Sklyanin E.K., Takhtadzhyan L.A., Faddeev L.D., Quantum inverse problem
 method.~{I}, \href{http://dx.doi.org/10.1007/BF01018718}{\textit{Theoret. and Math. Phys.}} \textbf{40} (1979), 688--706.

\bibitem{Slav07}
Slavnov N.A., The algebraic {B}ethe ansatz and quantum integrable systems,
 \href{http://dx.doi.org/10.1070/RM2007v062n04ABEH004430}{\textit{Russian Math. Surveys}} \textbf{62} (2007), 727--766.

\bibitem{Suth68}
Sutherland B., Further results for the many-body problem in one dimension,
 \href{http://dx.doi.org/10.1103/PhysRevLett.20.98}{\textit{Phys. Rev. Lett.}} \textbf{20} (1968), 98--100.

\bibitem{TF79}
Takhtadzhyan L.A., Faddeev L.D., The quantum method for the inverse problem and
 the {H}eisenberg {$XYZ$} model, \href{http://dx.doi.org/10.1070/RM1979v034n05ABEH003909}{\textit{Russian Math. Surveys}} \textbf{34}
 (1979), no.~5, 11--68.

\bibitem{TV95}
Varchenko A., Tarasov V., Jackson integral representations of solutions of the
 quantized {K}nizhnik--{Z}amolodchikov equation, \textit{St.~Petersburg
 Math.~J.} \textbf{6} (1995), 275--313, \href{http://arxiv.org/abs/hep-th/9311040}{hep-th/9311040}.

\bibitem{Yang67}
Yang C.N., Some exact results for the many-body problem in one dimension with
 repulsive delta-function interaction, \href{http://dx.doi.org/10.1103/PhysRevLett.19.1312}{\textit{Phys. Rev. Lett.}} \textbf{19}
 (1967), 1312--1315.

\end{thebibliography}
\end{document}